\documentclass[12pt,a4paper]{article}
 
\usepackage[a4paper,hmargin={20mm,20mm},vmargin={27mm,27mm}]{geometry}
\usepackage{authblk}

\linespread{1.05}
\parskip=0.8mm plus 1pt

\usepackage{mathrsfs}
\usepackage{amsfonts}
\usepackage{amssymb}
\usepackage{array}
\usepackage{eurosym}
\usepackage{bm}
\usepackage[scheme = plain, fontset = founder]{ctex}

\usepackage{multirow,bigstrut}
\usepackage{makecell}
\usepackage{enumerate}
\usepackage{tikz}
\usepackage{graphicx}
\usepackage{stfloats}
\usepackage{float}
\usepackage{array}
\usepackage{booktabs}
\usepackage{mathtools}
\usepackage{titlesec}
\usepackage[colorlinks,
            linkcolor=red,
            anchorcolor=blue,
            citecolor=blue
            ]{hyperref}

\usepackage[linesnumbered,boxed,ruled,commentsnumbered]{algorithm2e}
\usepackage{algorithmicx}
\usepackage{algpseudocode}

\titleformat*{\section}{\large\bfseries}
\titleformat*{\subsection}{\normalsize\bfseries}

\newenvironment{proof}{\noindent{\em \textbf{Proof.}}}{\quad \hfill$\Box$\vspace{2ex}}

\newtheorem{theorem}{Theorem}[section]

\newtheorem{remark}[theorem]{Remark}

\numberwithin{equation}{section}

\def \T {\mathbb{T}}

\def\qi {\mathbf{i}}

\newcommand{\norm}[1]{\left\lVert#1\right\rVert}
\newcommand{\abs}[1]{\left|#1\right|}

\title{\Large \textbf{FFT reconstruction of signals from MIMO sampled data}}

\author[a]{\small  Dong Cheng\thanks{chengdong720@163.com}}

\author[b]{\small  Xiaoxiao Hu\thanks{huxiaoxiao@wmu.edu.cn}}

\author[c]{\small Kit Ian Kou\thanks{kikou@umac.mo}}

\affil[a]{\small{Department of Mathematics, Faculty of Arts and Sciences, Beijing Normal University, Zhuhai 519087, China}}
\affil[b]{\small{The First Affiliated Hospital of Wenzhou Medical University, Wenzhou Medical University, Wenzhou 325035, China}}
\affil[c]{\small{Department of Mathematics, Faculty of Science and Technology, University of Macau, Macao, China}}

\date{}

%%%%%%%%%%%%%%%%%%%%%%%%%%%%%%%%%%%%%%%%%%%%%%%%%%%%%%%%%%%%%%%%%%%%%%%%%%%%%%%%%%%%%%%%%%%%%%%%%%%%%%%
\begin{document}
  \maketitle
\begin{abstract}
\normalsize

This paper introduces an innovative approach for signal reconstruction using data acquired through multi-input–multi-output (MIMO) sampling. First, we show that it is possible to perfectly  reconstruct a set of periodic band-limited signals \(\{x_r(t)\}_{r=1}^R\) from the samples of \(\{y_m(t)\}_{m=1}^M\), which are the output signals of a MIMO system with inputs \(\{x_r(t)\}_{r=1}^R\). Moreover, an FFT-based algorithm is designed to perform the reconstruction efficiently. It is demonstrated that this algorithm  encompasses FFT interpolation and multi-channel interpolation as special cases.   Then,  we investigate the consistency property and the aliasing error of the proposed sampling and reconstruction framework   to evaluate its effectiveness in reconstructing non-band-limited signals.  The analytical expression for the averaged mean square error (MSE) caused by aliasing is presented.
Finally,   the theoretical results are validated by  numerical simulations, and the performance of the proposed reconstruction method in the presence of noise is also examined.

\end{abstract}

\begin{keywords}
Signal reconstruction, MIMO sampling, consistent sampling, FFT interpolation,   error analysis.
\end{keywords}

\begin{msc}
42A15,  94A12,	41A25.
\end{msc}

%MSC 2020
%42A15 Trigonometric interpolation
%94A12
%Signal theory (characterization, reconstruction, filtering, etc.)
%41A25
%Rate of convergence, degree of approximation

\section{Introduction}\label{S_intro}

In numerous applications such as imaging technology \cite{liu2017signal,Li2019multichannel} and communications \cite{Chen2013shannon,Solodky2021optimal}, it often happens that convolutional measurements or generalized samples are   taken from systems. A common problem that needs to be dealt with  is how to reconstruct   continuous signals from these convolved samples. It is well known that a fundamental result in this field is the so-called generalized sampling expansion (GSE) proposed by Papoulis \cite{papoulis1977generalized}. GSE indicates that a band-limited signal can uniquely reconstructed  from the samples   taken from multiple output signals   of    linear   time invariant (LTI) systems provided that the LTI systems satisfy certain conditions.   The sampling strategy in  \cite{papoulis1977generalized} is also called multi-channel sampling \cite{Kang2010asymmetric} and has become a crucial topic in signal processing due to its theoretical significance and wide application.

The exploration and analysis of multi-channel sampling    has been broadened in various directions.  For instance, the multi-channel sampling theorem has been extended for   signals band-limited in the sense of  general integral transforms, such as fractional Fourier transform (FrFT) \cite{liu2017signal}, linear canonical transform (LCT) \cite{xu2017multichannel},  offset LCT \cite{wei2019convolution} and free metaplectic transform (FMT) \cite{Azhar2023papoulis}.  To eliminate the   band-limited constraint,   multi-channel sampling expansions were also established     in  shift invariant subspaces \cite{Kang2010asymmetric,zhao2018generalized} and the space of  signals with finite rate of innovation (FRI) \cite{akhondi2010multichannel} as well as the atomic spaces generated by arbitrary unitary operators \cite{Pohl2012U-invariant}. In addition, a consistency analysis of  multi-channel sampling formulas  was also presented in \cite{xu2017multichannel}. These extensions allow us to use the multi-channel sampling methodology more flexibly and conveniently. Another consideration for multi-channel sampling  is how to deal with signals that do not have a precise mathematical deterministic description, i.e., random signals.  The authors in \cite{Prender2006minimum,eldar_2015,medina2018papoulis} studied   the reconstruction formulas of random signals from  multichannel samples. The optimal multichannel sampling schemes were discussed in \cite{Chen2013shannon,Solodky2021optimal} to maximize   channel capacity and minimize    mean square error. In \cite{Cheng2022signal}, some stable multi-channel reconstruction methods were  provided  in the presence of noise by introducing the optimal post-filtering technique. It was shown that the optimal post-filtering technique gives a superior capability than the Wiener filtering in noise reduction.

The above multi-channel sampling systems are single-input and multiple-output (SIMO). In many applications such as    Doppler information estimation \cite{Yu2010MIMO}   and multiuser wireless communications \cite{wang2022nonlinear}, however, the involved  multi-channel systems   have  multiple  sources, thereby multi-input-multi-output (MIMO) channels arise.  The vector sampling expansion (VSE), proposed by Seidner et al. \cite{Seidner1998intro,Seidner2000vector}, considers the reconstruction of the input signals of   MIMO channels and   encompasses GSE as a special case. The sampling strategy of VSE is also called MIMO sampling \cite{Venka2003sampling}. In a series of papers, Venkataramani and Bresler  \cite{Venka2003sampling,Venka2003filter,venka2004multiple} extensively investigated  necessary and sufficient conditions on the   channel and the sampling rate/density that allow   perfect reconstruction of    multi-band input signals.
Like the SIMO case, the MIMO sampling and reconstruction has been extended to the band-limited signals in FrFT domain \cite{Ma2023nonuniform} and LCT domain \cite{Sharma2011vector}. Note that the original VSE was restricted to the same bandwidth in all inputs,  an interesting   generalization of VSE comes from \cite{Feuer2006generalization}, where  the input signals of the MIMO system are allowed to have  different bandwidths.
To perform MIMO sampling and reconstruction on non-band-limited signals, Shang et al. \cite{shang2007vector}  studied  the   vector  sampling problem  in shift-invariant subspaces and gave some necessary and sufficient conditions  for   the establishment of the error-free reconstruction formula.

\begin{figure}
  \centering
  \includegraphics[width=10cm]{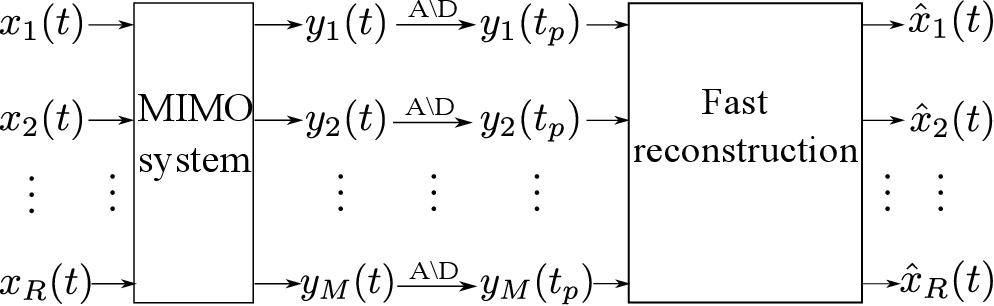}
  \caption{Diagram of MIMO sampling and reconstruction.}\label{MIMOsampling}
\end{figure}

Much research has centered on the conditions regarding  the   channel and the sampling rate to guarantee   perfect reconstruction of the input signals in MIMO systems.  By contrast, less attention has been paid to  the aspects of   algorithm  design and error analysis.  The problem of FIR reconstruction  for   MIMO sampling of multi-band signals was studied in  \cite{Venka2003filter}. It turns out that for MIMO sampling problems, perfect reconstruction FIR filters do not exist in general.   Kim et al.  \cite{Kim2008vector}
explored the  MIMO sampling  problem by the Riesz basis method and provided  an  unsharp  upper  bound   for the aliasing error in the VSE.  In this paper,  we will focus on the issues of implementation and error analysis for MIMO sampling problems. Unlike the previous works that consider infinite duration signals (see e.g. \cite{Seidner2000vector,shang2007vector}), herein we are interested in  time-limited signals.  A signal defined  on the finite interval $[a,b]$ can be viewed as part of a periodic signal with period $b-a$. There are   methods treating the sampling and reconstruction problems for periodic signals \cite{jacob2002sampling,margolis2008nonuniform,xiao2013sampling,Mohammadi2018sampling}. These works, however, do not involve multi-channel systems. In a   series of papers \cite{Cheng2022signal,cheng2019fft,Cheng2020multi}, the SIMO sampling problem for finite duration signals has been investigated in detail. By the algorithm proposed in \cite{cheng2019fft}, we can perform signal reconstruction from multi-channel samples using only fast Fourier transform (FFT). From the  perspective of FFT interpolation,
the main result  in  \cite{cheng2019fft,Cheng2020multi}  is
a natural extension  to the work of \cite{fraser1989interpolation,selva2015fft} in multi-channel systems. Therefore the approach of SIMO sampling in \cite{cheng2019fft} is also called FFT multi-channel interpolation (FMCI).  Importantly, the FMCI satisfies the interpolation consistency. This implies that resampling the reconstruction via the original sampling operation will produce unchanged measurements. To the authors' knowledge, the consistency of MIMO sampling has not been verified  in some situations (see e.g. \cite{Ma2023nonuniform}).  In this paper, following  \cite{cheng2019fft},  we will provide  an extension to the MIMO sampling expansion (see Fig. \ref{MIMOsampling}), where the input signals of the MIMO channel are finite duration or periodic. Moreover, the  implementation, the consistency and the error analysis for the proposed MIMO sampling and reconstruction method are presented as well.  The   main contributions are highlighted as follows.

\begin{enumerate}
\item The conditions on the MIMO channel and sampling rate that allow perfect reconstruction of   periodic band-limited input signals  are given. Moreover, the reconstruction formula is available   provided that the conditions  for perfect reconstruction are satisfied.
\item  Based on FFT, an efficient and   reliable algorithm is   designed  to compute the value of the reconstructed continuous signal at almost every instant.
\item  We show that   consistency holds for MIMO sampling if the number of output channels is divisible by the number of input channels. Otherwise, the consistency of MIMO sampling is not maintained.
\item The   error  analysis of reconstruction by the proposed method is presented. The expression  of the aliasing   error is given to reveal  how the MIMO system affects the accuracy of reconstruction.    Furthermore, the convergence property of  the proposed method is also verified.
\end{enumerate}

The rest of the paper is organized as follows. Section \ref{S_prelim} introduces   mathematical preliminaries and the problem setting. In Section \ref{S_mimo}, we present the MIMO sampling expansion for finite duration signals.  The conditions for perfect reconstruction, the reconstruction formula as well as its implementation  are investigated. In Section \ref{S_consis_error},  consistency test and   error analysis      are drawn. Section \ref{S_examples} provides examples to illustrate the results. Finally, we conclude the paper in Section \ref{S_conclusion}.

\section{Preliminaries and Problem Statement}\label{S_prelim}

\subsection{Preliminaries}\label{S21}

Throughout the article,     the set of  integers, positive integers, real numbers and complex numbers are denoted  by $\mathbb{Z}$, $\mathbb{Z}^+$, $\mathbb{R}$ and $\mathbb{C}$   respectively, while  the unit circle is represented by $\mathbb{T}:=[0,2\pi)$.
Let $L^p(\T),~1\leq p<\infty$ be the  class of signals $x(t)$ such that
\begin{equation*}
\norm{x}_p:= \left(\frac{1}{2\pi} \int_{\T}|x(t)|^p dt\right)^{\frac{1}{p}}<\infty
\end{equation*}
and $l^p$ be  the totality of sequences $\{a(n)\}_{n\in\mathbb{Z}}$  such that  $\sum_{n\in\mathbb{Z}}\abs{a(n)}^p< \infty$. We  introduce some preparatory knowledge of Fourier series (see e. g. \cite{folland1992fourier,Boehme1975generalized}). For every $x(t)\in L^2(\T)$, we see from Plancherel theorem that its Fourier coefficient sequence $\{a(n)\} $ belongs to $l^2$ and  the Parseval's identity  $\sum_{n\in\mathbb{Z}}\abs{a(n)}^2=\norm{x}_2^2$ holds. It is   known that $L^2(\T)$ is a Hilbert space equipped with  the inner product
\begin{equation*}
(x,y):= \frac{1}{2\pi} \int_{\T} x(t) \overline{y(t)} dt ,\quad \forall x ,y \in  L^2(\T).
\end{equation*}
If $y(t)\in  L^2(\T)$ and its Fourier coefficient sequence is $\{c(n)\} $, then we have the inner product preserving property, namely $$(x,y)=\sum_{n\in\mathbb{Z}}a(n)\overline{c(n)}.$$
Similar to the (aperiodic) convolution defined on $\mathbb{R}$, the cyclic convolution of two
periodic functions $x(t),h(t)$ is defined by
$$(x*h)(t):=\frac{1}{2\pi} \int_{\T}x(s)h(t-s) ds.$$
Suppose that  the Fourier coefficient sequence of $h(t)\in L^2(\T)$ is  $\{b(n)\} $, the convolution theorem states that  the Fourier coefficient sequence of $x*h$ is $\{a(n)b(n)\} $. It is noted that  the convolution theorem is valid in a broader sense, where the  functions may even be generalized functions.

Let $N_1,N_2\in\mathbb{Z}, \mathbf{N}=(N_1,N_2)$,  and $I^{\mathbf{N}}=\{n\in\mathbb{Z}: N_1\leq n \leq N_2\}$, we denote by  $B_{\mathbf{N}}$ the totality of  periodic band-limited functions (trigonometric polynomials) with the following form:
\begin{equation}\label{bandsignal}
x(t)=\sum_{n\in I^{\mathbf{N}}}a(n)e^{\qi nt}  ,~~~I^{\mathbf{N}}=\{n: N_1\leq n \leq N_2\}.
\end{equation}
The bandwidth of $f$ is defined by   the  cardinality of $I^{\mathbf{N}}$, denoted by $\mu(I^{\mathbf{N}})$.  In the paper,   the MIMO sampling expansion for periodic band-limited signals will be established first. Subsequently, we shall discuss what will happen if the proposed sampling formula and algorithm are used to reconstruct non-band-limited signals.

We end this part with a brief review on left inverses of  rectangular  matrices.
If $\mathbf{A}$ is a $m\times n$ ($m< n$) matrix with full column rank, then there are infinitely many left inverses for $\mathbf{A}$ and $\mathbf{A}_{left}^{-1} := \left(\mathbf{A}^{\mathrm{T}} \mathbf{A}\right)^{-1} \mathbf{A}^{\mathrm{T}}$ is a particular  left inverse of $\mathbf{A}$. Furthermore, any left  inverse of $\mathbf{A}$ can be expressed as $\mathbf{A}_{left}^{-1}+\mathbf{B}$, where the rows of $\mathbf{B}$ are vectors in the null space of $\mathbf{A}^{\mathrm{T}}$.

\subsection{Problem Statement}\label{S22}

We consider the MIMO  sampling and reconstruction for periodic signals. The left-hand  side of  Fig. \ref{MIMOsampling} depicts a   MIMO   LTI  system with $R$ inputs $x_1(t), x_2(t),\cdots,x_R(t)$ and $M$ outputs  $y_1(t), y_2(t),\cdots, y_M(t)$. Let $\mathbf{H}(t)$ be   the  impulse response function  matrix of  the MIMO system, then it is a  $M$ by $R$ matrix for relating  the inputs and the outputs.  Denote the $(m,r)$ entry of $\mathbf{H}(t)$ by $h_{mr}(t)$, then the MIMO system can be expressed as
\begin{equation*}
\begin{gathered}
y_1(t)=h_{11} * x_1(t)+h_{12} * x_2(t)+\cdots+h_{1 R} * x_R(t), \\
y_2(t)=h_{21} * x_1(t)+h_{22} * x_2(t)+\cdots+h_{2 R} * x_R(t), \\
\vdots \\
y_M(t)=h_{M 1} * x_1(t)+h_{M 2} * x_2(t)+\cdots+h_{M R} * x_R(t).
\end{gathered}
\end{equation*}
For $1\leq m\leq M$, $1\leq r\leq R$, let
\begin{equation}
x_r(t) = \sum_{n} a_r(n) e^{\qi nt},
\end{equation}
\begin{equation}\label{hFseries}
h_{mr}(t) = \sum_{n} b_{mr}(n) e^{\qi nt},
\end{equation}
\begin{equation}
y_{m}(t) = \sum_{n} c_{m}(n) e^{\qi nt}.
\end{equation}
The convolution theorem indicates that

\begin{equation}\label{relationFourier}
  \begin{pmatrix}
  c_{1}(n)\\
  c_{2}(n)\\
  \vdots\\
  c_{M}(n)
  \end{pmatrix}
  =
  \begin{pmatrix}
  b_{11}(n)&b_{12}(n)&\dots&b_{1R}(n)\\
  b_{21}(n)&b_{22}(n)&\dots&b_{2R}(n)\\
  \vdots&\vdots&\ddots&\vdots\\
  b_{M1}(n)&b_{M2}(n)&\dots&b_{MR}(n)
  \end{pmatrix}
  \begin{pmatrix}
  a_{1}(n)\\
  a_{2}(n)\\
  \vdots\\
  a_{R}(n)
  \end{pmatrix}.
\end{equation}
It is noticed   that the series (\ref{hFseries})
 may not converge point-wise or in the norm of $L^2$.  Nevertheless, $y_m(t)$ is well
defined provided that $\{c_m(n)\}\in l^1$. In this case, $h_{mr} (t)$ may be regarded as a generalized function \cite{Boehme1975generalized}.

Assume that all the  input signals of the MIMO system belong to $B_{\mathbf{N}}$, thus  have the same bandwidth $\mu(I^{\mathbf{N}})$.   The first objective of this paper is to   examine whether the $R$  input signals can be perfectly reconstructed from the samples of the $M$ output signals, as illustrated in the right-hand side of Fig. \ref{MIMOsampling}. The conditions concerning the relationship between \( M \) and \( R \), the impulse response function matrix \(\mathbf{H}(t)\), and the minimum sampling rate need to be specified to achieve perfect reconstruction.
  Furthermore, if these conditions are met, we have to devise suitable reconstruction frameworks.  These issues will be discussed in the next section.

In practice, aliasing  always occurs  when  the original continuous (analog) signals are not band-limited or the continuous signals are sampled  at a rate below the Nyquist rate.  In this work, we are particularly  concerned about what is the difference between the original signal $x_r(t)$ and the reconstructed signal $\hat{x}_r(t)$ if  the conditions for perfect reconstruction are not satisfied. The concept of consistency, initially proposed in \cite{Unser1994nonideal},  has been   employed in the  sampling problem  where there is no restriction of band-limitation on the original signal (see e.g. \cite{xu2017multichannel,Unset1997generalized,Hira2007consistent,poon2014consistent}). It implies that   the reconstructed signal $\hat{x}_r(t)$ is   indistinguishable from  $x_r(t)$ in the sense that they yield  the same measurements under the original sampling operation.
In Section \ref{S_consis_error},   the property of consistency for the proposed sampling and reconstruction framework will be investigated.  It is verified that the  consistency is not guaranteed when $M$ is not divisible by $R$. Another criterion to measure the closeness between $\hat{x}_r(t)$   and  $x_r(t)$ in this paper is the averaged MSE defined by (\ref{averagedMSE}). We will present the expression of this error and discuss how the MIMO LIT system and the sampling rate affect the reconstruction error.

\section{The MIMO Sampling Expansion of Periodic Band-limited Signals}\label{S_mimo}

In this section, we will derive  the MIMO sampling expansion of periodic band-limited signals, and 
provide some examples of MIMO systems allowing perfect reconstruction. Thanks to the distinctive relation between  the samples and the input signals  in frequency domain, an  algorithm based on FFT  is also devised for fast reconstruction.

Since  the input signals are band-limited and the output signals are linear combinations of the filtered versions for the input signals. It follows that the output signals are band-limited and  have the same bandwidth with the input signals. Therefore, to reconstruct $x_r(t)$ ($1\leq r\leq R$) from the samples of $y_m(t)$ ($1\leq m\leq M$) perfectly,   it is necessary to assume that $M\geq R$.  Given  $R$ and $M$, let $m_0 = \lfloor \frac{M}{R}\rfloor$ be the greatest integer not larger than $\frac{M}{R}$ and  denote by $L$   the number of samples in each output channel. To achieve perfect reconstruction of the input signals by FFT, we let
\begin{equation}\label{Ldef}
L = \min\{L'\in \mathbb{Z}^{+}: m_0 L'\geq   \mu(I^{\mathbf{N}})\}.
\end{equation}

\begin{remark}
 Perfect reconstruction can be achieved only if $M L\geq R \mu(I^{\mathbf{N}}) $. Let $\tilde{L} = \min\{L'\in \mathbb{Z}^{+}: M L'\geq R \mu(I^{\mathbf{N}})  \}$. Since $\frac{M}{R}\cdot L\geq \lfloor \frac{M}{R}\rfloor\cdot L=m_0L  $,  then $m_0L\geq \mu(I^{\mathbf{N}})$ implies that $\frac{M}{R}\cdot L\geq \mu(I^{\mathbf{N}})$. It follows that  $L\geq \tilde{L}$, which means that $L$ determined by (\ref{Ldef}) is sufficient  for perfect reconstruction. Moreover, if $M$ is divisible by $R$, we have that $L=\tilde{L}$. In this case, $\frac{L}{2\pi}$ is the minimum of the sampling rate for perfect reconstruction.
\end{remark}

\begin{remark}
If $m_0 L >  \mu(I^{\mathbf{N}})$, there always exists some $I^{\mathbf{N}'}$ such that $I^{\mathbf{N}}\subset I^{\mathbf{N}'}$ and $\mu(I^{\mathbf{N}'})=m_0L$. In this case, $B_{\mathbf{N}}\subset B_{\mathbf{N}'}$ thus every $x(t)\in B_{\mathbf{N}}$ belongs to $B_{\mathbf{N}'}$. Therefore, we  always assume that  $\mu(I^{\mathbf{N}})=m_0L$ if $L$  is determined by (\ref{Ldef}).
\end{remark}

Suppose that $m_0L = \mu(I^{\mathbf{N}})$, we partition $I^{\mathbf{N}}$ into $m_0$  pieces. Let
\begin{equation*}
I_k=\{n: N_1+(k-1)L\leq n\leq N_1+kL-1\},~~J_k= \bigcup_{l=k+1}^{m_0+k}I_l.
\end{equation*}
Then we have $I^{\mathbf{N}}=\bigcup_{k=1}^{m_0} I_k=J_0$  and $\mathbb{Z}=\bigcup_{k\in\mathbb{Z}} I_k $.
For $n\in I_1$, let
\begin{align}
&   \mathbf{a}_r(n)   = \left[a_r(n),a_r({n+L}), \cdots,a_r({n+(m_0-1)L})\right]^{\rm T},\label{fold_a} \\
&  \mathbf{b}_{mr}(n) = \left[b_{mr}(n),b_{mr}({n+L}), \cdots, b_{mr}({n+(m_0-1)L})\right]^{\rm T},\label{fold_b} \\
& \mathbf{e}(n,t)   = \left[e^{\qi nt} ,e^{\qi(n+L)t}, \cdots,e^{\qi(n+(m_0-1)L)t}\right]^{\rm T}.\nonumber
\end{align}
It follows that
\begin{equation}\label{fmatrix}
\begin{split}
x_r(t) & =\sum_{n\in I^{\mathbf{N}}}a_r(n)e^{\qi nt} =\sum_{n=I_1}\sum_{k=0}^{m_0-1}a_r({n+k L}) e^{\qi (n+k L)t} \\
&  = \sum_{n=I_1}\mathbf{e}(n,t)^{\rm T} \mathbf{a}_r(n)  .
\end{split}
\end{equation}
Similar considerations applying to $y_m(t)$, we have
\begin{equation*}
y_m(t)= \sum_{n\in I_1}\mathbf{c}_{m}(n)^{\rm T}   \mathbf{e}(n,t),
\end{equation*}
where
$$\mathbf{c}_{m}(n)  = \left[c(n),c ({n+L}),c({n+2 L}),\cdots,c({n+(m_0 -1)L})\right]^{\rm T}  .$$

The above partitioning technique help us to write the inputs $x_1(t), x_2(t),\cdots,x_R(t)$ and the outputs $y_1(t),y_2(t),\cdots,y_M(t)$ in a relatively compact form and the index set of the summations becomes the fundamental subset of  $I^{\mathbf{N}}$, namely $I_1$. More importantly, we will see that the alternative form of $y_m(t)$ could lead to an expression convenient for computation. Because of the periodicity of $\mathbf{e}(n,t)$, we   have that
\begin{equation*}
\mathbf{e}(n, \tfrac{2\pi p}{L})=e^{\qi n\frac{2\pi p}{L}}[1,1,\cdots,1]^{\rm T}
\end{equation*}
for every $0 \leq p \leq L-1$. It follows that the samples of $y_m(t)$ can be expressed as
\begin{equation}\label{sumformgm}
\begin{split}
y_m(\tfrac{2\pi p}{L}) &  =\sum_{n\in I_1}  \mathbf{c}_{m}(n) ^{\rm T} \mathbf{e}(n,\tfrac{2\pi p}{L}) \\
& = \sum_{n\in I_1}e^{\qi n\frac{2\pi p}{L}} \sum_{k=0}^{m_0 - 1}c_{m}(n+k L)\\
& = \sum_{n\in I_1} d_{m}(n) e^{\qi n\frac{2\pi p}{L}},
\end{split}
\end{equation}
where $ d_{m}(n)=\sum_{k=0}^{m_0-1}c_{m}(n+kL)$ is a folding version of $c_m(n)$. This indicates that  $\{y_m(\frac{2\pi p}{L})\}_p$  and $\{d_m(n)\}_n$  form a pair of discrete Fourier transform (DFT). By the inversion theorem of DFT, we easily obtain 
\begin{equation}\label{dmymp}
d_m(n) = \frac{1}{L}\sum_{p=0}^{L-1}y_m(\tfrac{2\pi p}{L})e^{-\qi n\frac{2\pi p}{L}}, \quad n\in I_1.
\end{equation}

Eq. (\ref{dmymp}) builds a bridge between the   samples and the Fourier coefficients of $y_m(t)$. To reconstruct the input signals from  the   samples of $y_m(t)$,  we need to make use of
the relation between the input signals and the output signals in the frequency domain. By (\ref{relationFourier}) and the definition of $d_m(n)$, we have that
\begin{align*}
d_m(n) & = \sum_{k=0}^{m_0-1}c_{m}(n+kL) \\
      &  = \sum_{k=0}^{m_0-1} \sum_{r=1}^{R} b_{mr}(n+kL) a_{r}(n+kL)\\
      & = \sum_{r=1}^{R} \sum_{k=0}^{m_0-1} b_{mr}(n+kL) a_{r}(n+kL) \\
      & = \widetilde{\mathbf{b}}_{m}(n)^{\rm T}  \widetilde{\mathbf{a}} (n),
\end{align*}
where
\begin{align*}
&\widetilde{\mathbf{b}}_{m}(n)  = \left[\mathbf{b}_{m1}(n)^{\rm T} ,\mathbf{b}_{m2}(n)^{\rm T} ,\cdots,\mathbf{b}_{mR}(n)^{\rm T} \right]^{\rm T},  \\
& \widetilde{\mathbf{a}} (n)     =\left[ \mathbf{a}_1(n)^{\rm T} ,\mathbf{a}_2(n)^{\rm T} ,\cdots, \mathbf{a}_R(n)^{\rm T} \right]^{\rm T},
\end{align*}
and $\mathbf{a}_r(n)$ and $\mathbf{b}_{mr}(n)$ are given by (\ref{fold_a}) and  (\ref{fold_b}) respectively. Let $\mathbf{d}(n) = \left[d_1(n), d_2(n), \cdots, d_M(n) \right]^{\rm T}$ and $
\mathbf{B}(n) =  \left[\widetilde{\mathbf{b}}_{1}(n), \widetilde{\mathbf{b}}_{2}(n) \cdots \widetilde{\mathbf{b}}_{M}(n) \right]^{\rm T}$, we see that
\begin{equation}\label{relation_d_a}
\mathbf{d}(n) = \mathbf{B}(n) \widetilde{\mathbf{a}} (n) .
\end{equation}
Unlike Eq. (\ref{relationFourier}) bridging  $c_m(n)$ and $a_r(n)$,  Eq. (\ref{relation_d_a})  makes a connection between $d_m(n)$ and $a_r(n)$. Based on Eq. (\ref{sumformgm}) and  Eq. (\ref{relation_d_a}), we obtain the following theorem. 

\begin{theorem}\label{MIMOsampling_formula}
Let $x_r(t)\in B_{\mathbf{N}}$, $1\leq r\leq R$ and $y_m(t)$, $1\leq m\leq M$ be the input signals and the output signals of the   MIMO system $\mathbf{H}(t)$. Let  $L$,  $\mathbf{B}(n)$ be given above. If $\mathbf{B}(n)$  is of full column rank for every  $n\in I_1$, then the input signal $x_r(t)$ ($1\leq r\leq R$) can be perfectly reconstructed from the samples of the output signals via
\begin{equation}\label{MIMOexpansion}
x_r(t) = \frac{1}{L} \sum_{m=1}^M \sum_{p=0}^{L-1} y_m(\tfrac{2\pi p}{L})g_{rm}(t-\tfrac{2\pi p}{L}),
\end{equation}
where the $g_{rm}(t)$ is given by (\ref{interp_func}).
\end{theorem}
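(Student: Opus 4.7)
The plan is to run the argument backwards through the chain of identities already established in the section: from samples $y_m(\tfrac{2\pi p}{L})$ to DFT coefficients $d_m(n)$ via (\ref{dmymp}), then from $\mathbf{d}(n)$ to $\widetilde{\mathbf{a}}(n)$ by inverting (\ref{relation_d_a}), and finally from $\widetilde{\mathbf{a}}(n)$ back to $x_r(t)$ through the compact Fourier representation (\ref{fmatrix}).

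First I would observe that because $\mathbf{B}(n)$ is of full column rank for each $n\in I_1$, there exists a left inverse $\mathbf{B}^{\dagger}(n)$ (for instance the one recalled at the end of Section \ref{S21}) such that $\widetilde{\mathbf{a}}(n)=\mathbf{B}^{\dagger}(n)\mathbf{d}(n)$. Partition the rows of $\mathbf{B}^{\dagger}(n)$ into $R$ blocks of $m_0$ rows and let $\mathbf{B}^{\dagger}_{r}(n)$ denote the $r$-th such block of size $m_0\times M$; then $\mathbf{a}_r(n)=\mathbf{B}^{\dagger}_{r}(n)\mathbf{d}(n)$ by construction of $\widetilde{\mathbf{a}}(n)$. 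Writing $\mathbf{B}^{\dagger}_{r}(n)=[\beta^{(r)}_{k,m}(n)]_{0\le k\le m_0-1,\,1\le m\le M}$, each Fourier coefficient of interest is recovered by
\begin{equation*}
a_r(n+kL)=\sum_{m=1}^{M}\beta^{(r)}_{k,m}(n)\,d_m(n),\quad n\in I_1,\;0\le k\le m_0-1.
\end{equation*}

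Next I would substitute this expression into (\ref{fmatrix}) and then plug (\ref{dmymp}) in for $d_m(n)$. Interchanging the finite sums $\sum_{n\in I_1}\sum_{k=0}^{m_0-1}$ with $\sum_{m=1}^{M}\sum_{p=0}^{L-1}$ collects the samples $y_m(\tfrac{2\pi p}{L})$ as outer factors, leaving the coefficient
\begin{equation*}
g_{rm}\!\left(t-\tfrac{2\pi p}{L}\right)=\frac{1}{L}\sum_{n\in I_1}\sum_{k=0}^{m_0-1}\beta^{(r)}_{k,m}(n)\,e^{\qi(n+kL)t}\,e^{-\qi n\frac{2\pi p}{L}}.
\end{equation*}
The last step is a translation-invariance check: since $e^{\qi kL\cdot(2\pi p/L)}=1$, the exponent $-\qi n\tfrac{2\pi p}{L}$ may be replaced by $-\qi(n+kL)\tfrac{2\pi p}{L}$, so the right-hand side depends only on the difference $t-\tfrac{2\pi p}{L}$, confirming that $g_{rm}$ is a single function of one variable (this is exactly (\ref{interp_func})). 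Combining the pieces gives (\ref{MIMOexpansion}).

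The bookkeeping around the block structure of $\mathbf{B}^{\dagger}(n)$ and the non-uniqueness of the left inverse is the main subtlety to watch. I would explicitly flag that $\mathbf{B}^{\dagger}(n)$ is not unique unless $M=R\,m_0$, so in general the functions $g_{rm}(t)$ (and consequently the reconstruction formula) are not canonical; any choice satisfying $\mathbf{B}^{\dagger}(n)\mathbf{B}(n)=\mathbf{I}$ on $I_1$ yields a valid reconstruction, which also foreshadows the consistency discussion of Section \ref{S_consis_error} where the case $R\mid M$ plays a distinguished role. All other manipulations are routine rearrangements of finite sums, so no convergence issue arises because every signal in sight lies in the finite-dimensional space $B_{\mathbf{N}}$.
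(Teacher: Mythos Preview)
Your proposal is correct and follows essentially the same route as the paper: invoke a left inverse $\mathbf{Q}(n)$ of $\mathbf{B}(n)$ to recover $\widetilde{\mathbf{a}}(n)$ from $\mathbf{d}(n)$, substitute the DFT inversion (\ref{dmymp}), interchange the finite sums, and use $e^{\qi kL\cdot 2\pi p/L}=1$ to exhibit the shift structure defining $g_{rm}$. The only slip is a stray $\tfrac{1}{L}$ in your displayed expression for $g_{rm}(t-\tfrac{2\pi p}{L})$; that factor already sits outside in (\ref{MIMOexpansion}), so it should not reappear inside the interpolating function.
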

\begin{proof}
Notice  that $\mathbf{B}(n)$ is a $M$ by $m_0R$ matrix for every  $n\in I_1$. From the definition of $m_0$, we see that $M\geq  m_0R$. If $\mathbf{B}(n)$  is of full column rank for every  $n\in I_1$, then  it is left invertible. Let $\mathbf{Q}(n)$ be a left inverse of $\mathbf{B}(n)$ and 
$$
\mathbf{Q}(n)=
\begin{bmatrix}
q_{1,1}(n) &q_{1,2}(n) &\cdots & q_{1,M}(n)  \\
q_{2,1}(n)& q_{2,2}(n) &\cdots & q_{2,M}(n) \\
\vdots&\vdots&\ddots&\vdots\\
q_{m_0R,1}(n)& q_{m_0R,2}(n) &\cdots & q_{m_0R,M}(n)
\end{bmatrix}.
$$
Based on Eq. (\ref{sumformgm}) and  Eq. (\ref{relation_d_a}), we obtain the relation between  $\mathbf{d}(n)$ and $\widetilde{\mathbf{a}} (n)$. Motivated by the expression (\ref{fmatrix}) as well as   the fact that  $\mathbf{a}_r(n)$ is a part of $\widetilde{\mathbf{a}} (n)$, we
let $\widetilde{\mathbf{e}}_r(n,t) =  \left[ \mathbf{0}_{1\times m_0(r-1)}, \mathbf{e}(n,t)^{\rm T},  \mathbf{0}_{1\times m_0(R-r)} \right]^{\rm T} $, then $x_r(t)$ can be rewritten as
%$$
%\widetilde{\mathbf{e}}_r(n,t) =  [\underbrace{0,0,\cdots,0}_{m_0(r-1)}, \mathbf{e}(n,t)  ]
%$$
%$$
%\widetilde{\mathbf{e}}_r(n,t) =  \left[ \mathbf{0}_{1\times m_0(r-1)}, \mathbf{e}(n,t)^{\rm T},  \mathbf{0}_{1\times m_0(R-r)} \right]^{\rm T}
%$$
\begin{align*}
x_r(t)& =  \sum_{n=I_1}\mathbf{e}(n,t)^{\rm T} \mathbf{a}_r(n)   \\
      & = \sum_{n=I_1} \widetilde{\mathbf{e}}_r(n,t)^{\rm T} \widetilde{\mathbf{a}} (n). 
\end{align*}
Note that $\mathbf{Q}(n)\mathbf{B}(n)$ is an identity matrix, Plugging it to the above equation, we see that
\begin{align*}
x_r(t) & =  \sum_{n=I_1} \widetilde{\mathbf{e}}_r(n,t)^{\rm T} \mathbf{Q}(n)\mathbf{B}(n)\widetilde{\mathbf{a}} (n) \\
      & = \sum_{n=I_1} \widetilde{\mathbf{e}}_r(n,t)^{\rm T} \mathbf{Q}(n) \mathbf{d}(n).
\end{align*}
Denote $\mathbf{Q}(n)^{\rm T} \widetilde{\mathbf{e}}_r(n,t)$ by $\mathbf{z}_r(n,t)$ and 
let  $ \mathbf{z}_r(n,t) =\left[z_{r1}(n,t),z_{r2}(n,t),\cdots, z_{rM}(n,t) \right]^{\rm T} $, it follows that
\begin{align}
x_r(t)& =  \sum_{n=I_1} \sum_{m=1}^M z_{rm}(n,t) d_m(n) \nonumber \\
      & =  \sum_{n=I_1} \sum_{m=1}^M z_{rm}(n,t) \frac{1}{L}\sum_{p=0}^{L-1} y_m(\tfrac{2\pi p}{L})e^{-\qi n\frac{2\pi p}{L}} \nonumber \\
      & =  \frac{1}{L} \sum_{m=1}^M \sum_{p=0}^{L-1} y_m(\tfrac{2\pi p}{L})\sum_{n=I_1}  z_{rm}(n,t) e^{-\qi n\frac{2\pi p}{L}}. \label{samplingformula1}
\end{align}

It is seen from   (\ref{samplingformula1})  that  $x_r(t)$ can be reconstructed from the samples of the output signals and  $\sum_{n=I_1}  z_{rm}(n,t) e^{-\qi n\frac{2\pi p}{L}}$ acts  as a bridge between continuous signals and discrete samples. Let
$$
\beta_{rm} (n):=
\begin{cases}
 q_{m_0(r-1)+k,m}(n+L-kL),\quad &\text{if} \ \, n \in I_k, \ k=1,2,\cdots,m_0;\\
 0,\quad &\text{if} \ \, n \notin I^{\mathbf{N}};
\end{cases}
$$
and define 
\begin{equation}\label{interp_func}
g_{rm}(t) := \sum_{n\in I^{\mathbf{N}}}\beta_{rm} (n)  e^{\qi n  t}, 
\end{equation}
we claim  that 
$
\sum_{n=I_1}  z_{rm}(n,t) e^{-\qi n\frac{2\pi p}{L}}= g_{rm}(t-\tfrac{2\pi p}{L}).
$
As a consequence, we  arrive    at the reconstruction formula (\ref{MIMOexpansion}).

For any $n\in I_1, 1\leq m\leq M$, it is evident that $\beta_{rm}(n+(k-1)L) = q_{m_0(r-1)+k,m}(n)$. By the definition of $z_{rm}(n,t)$, we get
\begin{align*}
z_{rm}(n,t) & =  \sum_{k=1}^{m_0 } q_{m_0(r-1)+k,m}(n)  e^{\qi (n+(k-1)L)t}  \\
      & =   \sum_{k=1}^{m_0 }  \beta_{rm}(n+(k-1)L) e^{\qi (n+(k-1)L)t}.
\end{align*}
Inserting this expression  into $\sum_{n=I_1}  z_{rm}(n,t) e^{-\qi n\frac{2\pi p}{L}}$ and note that $ e^{\qi n (k-1){2\pi p} }=1$, we obtain
\begin{align*}
\sum_{n=I_1}  z_{rm}(n,t) e^{-\qi n\frac{2\pi p}{L}} & = \sum_{n \in I_1} \sum_{k=1}^{m_0 }  \beta_{rm}(n+(k-1)L) e^{\qi (n+(k-1)L)t} e^{-\qi n\frac{2\pi p}{L}} \\
& = \sum_{k=1}^{m_0 } \sum_{n\in I_1} \beta_{rm}(n+(k-1)L) e^{\qi (n+(k-1)L)(t-\frac{2\pi p}{L})}\\
& = \sum_{n\in I^{\mathbf{N}}}\beta_{rm} (n) e^{\qi n (t- \frac{2\pi p}{L})} \\
& = g_{rm}(t-\tfrac{2\pi p}{L}).
\end{align*}
The proof is complete.
\end{proof}

Having presented the above MIMO sampling expansion, we are in a position to give several  examples of MIMO systems  that allow  perfect reconstruction to show  the appearance of $\mathbf{B}(n)$ and $\mathbf{Q}(n)$.  According to the relationship between $M$ and $R$,   MIMO systems can be divided into three categories.
\begin{itemize}
\item If $M=R$, then  $m_0 = 1$, $L = \mu(I^{\mathbf{N}}) $.  Under this case, $\mathbf{B}(n)$ is a $M\times M$ matrix, and
\begin{equation*}
\mathbf{B}(n) =
\begin{bmatrix}
b_{11}(n) &b_{12}(n) &\cdots & b_{1M}(n)  \\
b_{21}(n)& b_{22}(n) &\cdots & b_{2M}(n) \\
\vdots&\vdots&\ddots&\vdots\\
b_{M1}(n)& b_{M2}(n) &\cdots & b_{MM}(n)
\end{bmatrix}.
\end{equation*}
\item If $M>R$ and $M$ is a multiple of $R$, then  $m_0 = \frac{M}{R}$, $L=\frac{\mu(I^{\mathbf{N}})}{m_0} =\frac{\mu(I^{\mathbf{N}})R}{M}$ and  $\mathbf{B}(n)$ is a $M\times M$ matrix.  For example, when $M=4$ and $R=2$, we have  that $m_0=2$ and
\begin{equation*}
\mathbf{B}(n) =
\begin{bmatrix}
b_{11}(n) &b_{11}(n+L) &b_{12}(n) &b_{12}(n+L)   \\
b_{21}(n)& b_{21}(n+L) & b_{22}(n)  & b_{22}(n+L) \\
b_{31}(n)& b_{31}(n+L) & b_{32}(n)  & b_{32}(n+L) \\
b_{41}(n)& b_{41}(n+L) & b_{42}(n)  & b_{42}(n+L)
\end{bmatrix}.
\end{equation*}
\item  If $M>R$ and $M$ is not a multiple of $R$, then  $\mathbf{B}(n)$ is a $M\times (m_0R)$ matrix.  For example, when $M=3$ and $R=2$,   we have that $m_0=1$ and
\begin{equation*}
\mathbf{B}(n) =
\begin{bmatrix}
b_{11}(n) & b_{12}(n)    \\
b_{21}(n)&   b_{22}(n)   \\
b_{31}(n)&  b_{32}(n)
\end{bmatrix}.
\end{equation*}
\end{itemize}

We provide in Table \ref{example_MIMOsyst}   several examples of MIMO systems that qualify for perfect reconstruction  and the corresponding $\mathbf{Q}(n)$ is also given. It can be seen from  Theorem \ref{MIMOsampling_formula} that $\mathbf{Q}(n)$  plays a crucial role in constructing   interpolating  functions. Theoretically, the analytical expression of the interpolating function can be computed  via (\ref{interp_func}).  But the interpolating function's analytical expression might be very complicated. Nonetheless,  it is not the case that  the  continuous signal  $x_r(t)$ can only be determined by (\ref{MIMOexpansion}).  We are able to compute the values of the reconstructed continuous signals at almost every instant without using the interpolating function $g_{rm}(t)$  directly.

\begin{figure}[ht]
  \centering
  \includegraphics[width=0.9\textwidth]{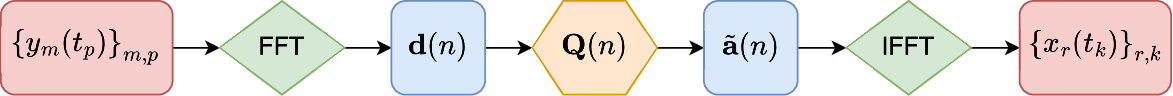}
  \caption{Diagram of FFT-based signal   reconstruction from MIMO samples.}\label{fast_MIMO}
\end{figure}

 Figure \ref{fast_MIMO} offers a glimpse of the foundation for reconstructing the original signals with FFT-based technique. To design an accurate reconstruction algorithm, additional steps such as   modulation and zero padding are required.  These steps help to  unfold the folded frequency, separate signals from various channels, and improve time domain resolution. The proposed Algorithm \ref{alg1} allows us to compute any number of  equally spaced samples of the reconstructed signals.  When $R=1$,  Algorithm \ref{alg1} will reduce to  FFT multi-channel interpolation, and it then turns into FFT interpolation if $M$ is also equal to $1$ and the system is the identity.

% Please add the following required packages to your document preamble:
% \usepackage{multirow}
{%\renewcommand\arraystretch{2.5}  
\begin{table}[htbp]
\centering
\caption{Examples of MIMO systems.}  \label{example_MIMOsyst}
%\small
%\footnotesize
\scriptsize
\vspace{0.2cm}	 
\begin{tabular}{|l|l|l|l|}
\hline
$M,R$  & MIMO system &$ \mathbf{B}(n)$  & $\mathbf{Q}(n)$  \\ \hline
\multirow{4}{*}{\makecell[l]{$M=2$  \\ $R=2$ \\ $m_0=1$}} &  \makecell[l]{ ~\\
 $y_1(t) = x_1(t) + x'_2(t)$\\
 $y_2(t) = x'_1(t) + x_2(t)$\\ ~} &  \makecell[l]{ $\begin{bmatrix}
 1 & \qi n \\
 \qi n & 1 \\
\end{bmatrix}$ } & \makecell[l]{$\begin{bmatrix}
 \frac{1}{n^2+1} & -\frac{\qi n}{n^2+1} \\
 -\frac{\qi n}{n^2+1} & \frac{1}{n^2+1} \\
\end{bmatrix}$} \\ \cline{2-4} 
                & \makecell[l]{~\\$y_1(t) = x_1(t) + x_2(t+\alpha)$\\
 $y_2(t) = x_1(t+\alpha) + 2x_2(t)$\\ ~}   &  \makecell[l]{$\begin{bmatrix}
 1 & e^{\qi n \alpha} \\
e^{\qi n \alpha}  & 2 \\
\end{bmatrix}$}  & \makecell[l]{$ \begin{bmatrix}
 \frac{2}{2-e^{2 \qi \alpha  n}} & -\frac{e^{\qi \alpha  n}}{2-e^{2 \qi \alpha  n}} \\
 -\frac{e^{\qi  n \alpha }}{2-e^{2 \qi n \alpha }} & \frac{1}{2-e^{2 \qi   n\alpha}} \\
\end{bmatrix}$}  \\ \cline{2-4} 
                   &  \makecell[l]{~\\$y_1(t) = x_1(t) + x'_2(t)$\\
 $y_2(t) = x_1(t) +  x'_1(t) + x_2(t)$\\~} &  \makecell[l]{$\begin{bmatrix}
 1 & \qi n \\
 1+\qi n & 1 \\
\end{bmatrix} $} &  \makecell[l]{$\begin{bmatrix}
 \frac{1}{n^2 -\qi n+1} &  \frac{-\qi n}{n^2-\qi n+1} \\
 \frac{-1-\qi n}{n^2-\qi n+1} & \frac{1}{n^2-\qi n+1} \\
\end{bmatrix}$} \\ \cline{2-4} 
                   &  \makecell[l]{~\\$y_1(t) = x_1(t) +x_2(t+\alpha)$\\
 $y_2(t) =  x'_1(t) + x_2(t)$\\~} & \makecell[l]{$\begin{bmatrix}
 1 & e^{\qi n \alpha}  \\
 \qi n & 1 \\
\end{bmatrix}$}  & \makecell[l]{$ \begin{bmatrix}
 \frac{1}{1-\qi n e^{\qi \alpha  n}} & \frac{-e^{\qi \alpha  n}}{1-\qi n e^{\qi \alpha  n}} \\
 \frac{-\qi n}{1-\qi n e^{\qi \alpha  n}} & \frac{1}{1-\qi n e^{\qi \alpha  n}} \\
\end{bmatrix}$}   \\ \hline
\multirow{2}{*}{\makecell[l]{$M=3$  \\ $R=2$ \\ $m_0=1$}} & \makecell[l]{$y_1(t) = x_1(t) +x_2(t+\alpha)$\\$y_2(t) =  x_1(t+\alpha) + x_2(t)$\\$y_3(t) =  2x_1(t) + x_2(t)$ } &   \makecell[l]{$\begin{bmatrix}
1 & e^{\qi n \alpha}   \\
e^{\qi n \alpha} &   1  \\
 2 &  1
\end{bmatrix}$}  &  \makecell[l]{~\\  $q_{11}(n)=(2-2 e^{\qi \alpha  n}-e^{2 \qi \alpha  n})/\sigma_1$ \\   $q_{12}(n)=(e^{3 \qi \alpha  n}-2)/\sigma_1$\\   $q_{13}(n)=2 \left(1-e^{\qi \alpha  n}+e^{2 \qi \alpha  n} \right)/\sigma_1$\\   $q_{21}(n)=(3 e^{\qi \alpha  n}+e^{3 \qi \alpha  n}-2)/\sigma_1$ \\   $q_{22}(n)=(5-2 e^{\qi \alpha  n}-e^{2 \qi \alpha  n})/\sigma_1 $\\   $q_{23}(n)=(1-4 e^{\qi \alpha  n}+e^{2 \qi \alpha  n})/\sigma_1  $ \\   $\sigma_1={6-8 e^{\qi \alpha  n}+3 e^{2 \qi \alpha  n}+e^{4 \qi \alpha  n}}$ \\~} \\ \cline{2-4} 
                   &\makecell[l]{ ~\\$y_1(t) = x_1(t) +x_2(t)$\\
$y_2(t) =  x_1(t) + x'_2(t)$ \\
$y_3(t) =  x'_1(t) + x_2(t)$\\~}   & $\begin{bmatrix}
1 & 1   \\
1 &   \qi n  \\
\qi n &  1
\end{bmatrix}$  & $\begin{bmatrix}
 \frac{1}{2 \qi n+3-n^2} & \frac{2 \qi-n}{\qi n^2+n+3 \qi-n^3} & \frac{\qi n^2+n-\qi}{\qi n^2+n+3 \qi-n^3} \\
 \frac{1}{2 \qi n+3-n^2} &\frac{\qi n^2+n-\qi}{\qi n^2+n+3 \qi-n^3} & \frac{2 \qi-n}{\qi n^2+n+3 \qi-n^3} \\
\end{bmatrix}$  \\ \hline
\multirow{2}{*}{\makecell{$M=4$  \\ $R=2$ \\ $m_0=2$}} & \makecell[l]{~\\
$y_1(t) = 2x_1(t) +x_2(t)$\\
$y_2(t) =  x_1(t) + x'_2(t)$ \\
$y_3(t) =  x'_1(t) + x_2(t)$ \\
$ y_4(t) =  x'_1(t) + x'_2(t)$\\~}  &  $\begin{bmatrix}
2 & 2 & 1 & 1  \\
1 & 1 & \qi n  &\qi(n+L) \\
\qi n & \qi(n+L) &  1  & 1  \\
\qi n & \qi(n+L) & \qi n  &\qi(n+L)
\end{bmatrix}$   &  $\begin{bmatrix}
 \frac{L+n+\qi}{L} & \frac{L+n+2 \qi}{-L} & \frac{L+n+\qi}{-L} & \frac{L+n+2 \qi}{L} \\
 -\frac{n+\qi}{L} & \frac{n+2 \qi}{L} & \frac{n+\qi}{L} & -\frac{n+2 \qi}{L} \\
 \frac{L+n+\qi}{-L} & \frac{2 (L+n+\qi)}{L} & \frac{2 L+2 n+\qi}{L} &  \frac{2 L+2 n+\qi}{-L} \\
 \frac{n+\qi}{L} & -\frac{2 (n+\qi)}{L} & -\frac{2 n+\qi}{L} & \frac{2 n+\qi}{L} \\
\end{bmatrix}$  \\ \cline{2-4} 
                   &  \makecell[l]{ $y_1(t) = 2x_1(t) +x_2(t)$\\
$y_2(t) =  x_1(t+\alpha) + x_2(t) $\\
$y_3(t) =  x_1(t) + x_2(t+\alpha)$ \\
$ y_4(t) =  x_1(t+\alpha) + x_2(t+\alpha)$}  & \makecell[l]{~\\ [5pt] $\begin{bmatrix}
 2 & 2 & 1 & 1 \\
\sigma_2^{-1} &\sigma_3\sigma_2^{-1} & 1 & 1 \\
 1 & 1 & \sigma_2^{-1} &\sigma_3\sigma_2^{-1} \\
\sigma_2^{-1} & \sigma_3\sigma_2^{-1}& \sigma_2^{-1} &\sigma_3\sigma_2^{-1} \\
\end{bmatrix} $\\  ~ }&  \makecell[l]{
$\begin{bmatrix}
 \frac{\sigma_2-\sigma_3}{1-\sigma_3} &  \frac{\sigma_3-\sigma_2}{1-\sigma_3} & \frac{\sigma_3-2\sigma_2}{1-\sigma_3}& \frac{2\sigma_2-\sigma_3}{1-\sigma_3}  \\[2pt]
 \frac{1-\sigma_2}{1-\sigma_3} &  \frac{\sigma_2-1}{1-\sigma_3}& \frac{2\sigma_2-1}{1-\sigma_3}& \frac{1-2\sigma_2}{1-\sigma_3} \\[2pt]
 \frac{\sigma_3-\sigma_2}{1-\sigma_3} &   \frac{\sigma_2-2\sigma_3}{1-\sigma_3} &  \frac{2\sigma_2-2\sigma_3}{1-\sigma_3} & \frac{2\sigma_3-\sigma_2}{1-\sigma_3} \\[2pt]
 \frac{\sigma_2-1}{1-\sigma_3} &\frac{2-\sigma_2}{1-\sigma_3} &\frac{2-2\sigma_2}{1-\sigma_3} &\frac{\sigma_2-2}{1-\sigma_3} 
\end{bmatrix}$\\ 
$\sigma_2= e^{-\qi \alpha  n}$, $\sigma_3= e^{\qi \alpha  L}$
}  \\ \hline
\end{tabular}
\end{table}
}

\IncMargin{1em} % 使得行号不向外突出
\begin{algorithm}[htb]

    \SetAlgoNoLine % 不要算法中的竖线
    \SetKwInOut{Input}{\textbf{Input}}\SetKwInOut{Output}{\textbf{Output}} % 替换关键词

    \Input{
    \\
    1. The MIMO samples $\mathbf{Y}=[\mathbf{y}_1,\mathbf{y}_2,\dots,\mathbf{y}_M]$, where $\mathbf{y}_m$ is the    $L$ samples of   \\~~~   $y_m$,  i.e., $\mathbf{y}_m=[y_m(t_0),y_m(t_1),\cdots,y_m(t_{L-1})]^{\rm T}$ with $t_p=\frac{2\pi p}{L}$;\\
    2.  The location of lower bound for the frequency band: $N_1$ \;\\
     3.   The number of  function values of  $\mathcal{T}_\mathbf{N} x_r(t)$:  $N^o_{r}$.\ \\}
    \Output{
        \\
         1.  The vector $\mathbf{x}^o_r$ consisting   of $N^o_{r}$ function values of $\mathcal{T}_\mathbf{N} x_r(t)$. \medskip \\}
    \BlankLine

    Multiply $k$-th row of $\mathbf{Y}$ by $\frac{1}{L}e ^{   \frac{-2\pi\qi N_1 (k-1)}{L} }$, obtain ${\mathbf{Y}_e}$ for $1\leq k\leq L$\;
    Take FFT of ${\mathbf{Y}_e}$ (for each column), obtain $\widehat{\mathbf{Y}}_e$\;
    Compute $m_0R\times L$ matrix $\widetilde{\mathbf{A}}$, where $\widetilde{\mathbf{A}}[:,k]=\mathbf{Q}(N_1+k-1) \widehat{\mathbf{Y}}_e[k,:]^{\rm T}$ for $1\leq k\leq L$\;
    Flatten $\widetilde{\mathbf{A}}$  w.r.t. row, reshape it as the vector named $\mathbf{a}_{all}$ with length $m_0RL$\;
     Let   $\mathbf{f}_r = \mathbf{a}_{all}[m_0L(r-1)+1:m_0Lr]$ for $1\leq r\leq R$\;
    Zero padding: add $N^o_r- m_0L$ zeros at the end of $\mathbf{f}_r$, obtain $\mathbf{f}_r^z$ for $1\leq r\leq R$\;
    Take IFFT of $\mathbf{f}_r^z$  and multiply its $k$-th element    by ${N_r^o}  e^{  \frac{2\pi\qi N_1 (k-1)}{N_r^o} } $,   get  $\mathbf{x}^o_r$  ($1\leq r\leq R$).
    \caption{FFT-based signal reconstruction algorithm for MIMO samples.}\label{alg1}
\end{algorithm}
\DecMargin{1em}

\section{Consistency Test and Error analysis}\label{S_consis_error}

The MIMO sampling expansion (\ref{MIMOexpansion}) and its FFT-based implementation  are provided to reconstruct periodic band-limited signals from MIMO samples.  In practice, however, the conditions for perfect reconstruction are generally unsatisfied. For example, if the  signal to be reconstructed is  non-band-limited,  or the sampling rate is too low, the aliasing error is introduced. Define
\begin{equation}\label{reconst-formula}
\hat{x}_r(t) =[\mathcal{T}_{\mathbf{N}} x_r](t) := \frac{1}{L} \sum_{m=1}^M \sum_{p=0}^{L-1} y_m(\tfrac{2\pi p}{L}) g_{rm}(t-\tfrac{2\pi p}{L}),\quad  1\leq r \leq R, 
\end{equation}
we will investigate the difference between the original signal $x_r(t)$ and the reconstructed signal $\hat{x}_r(t)$ in this section.

\subsection{Consistency Test}

Let
\begin{equation}
\hat{y}_m(t) = h_{m1} * \hat{x}_1(t)+h_{m2} * \hat{x}_2(t)+\cdots+h_{m R} * \hat{x}_R(t)  , \quad  1\leq m \leq M.
\end{equation}
 If the following equality holds
\begin{equation}
\hat{y}_m(\tfrac{2\pi p}{L}) =  {y}_m(\tfrac{2\pi p}{L}), \quad  0\leq p\leq L-1,~  1\leq m \leq M,
\end{equation}
 then we say the MIMO sampling expansion satisfies the  extended interpolation consistency.

When $M$ is divisible by $R$, we have $M=m_0R$. By the definition of $\beta$, we see that
\begin{equation*}
\beta_{rj}(n+sL)=q_{m_0(r-1)+s+1,j}(n),
\end{equation*}
i.e., the entry of $\mathbf{Q}(n)$ in row $m_0(r-1)+s+1$ and column $j$. On the other hand,
from the construction of  $\mathbf{B}(n)$, $b_{m,r}(n+sL)$ is the  entry of $\mathbf{B}(n)$ in
row $m$ and column $m_0(r-1)+s+1$. It follows that
\begin{equation*}
\sum_{r=1}^R\sum_{s=0}^{m_0-1}b_{m,r}(n+sL)\beta_{rj}(n+sL)
\end{equation*}
is the entry of $\mathbf{B}(n)\mathbf{Q}(n)$ in row $m$ and column $j$. Note that  $\mathbf{Q}(n)$ is a left inverse of $\mathbf{B}(n)$. When $M=m_0R$, both $\mathbf{B}(n)$ and  $\mathbf{Q}(n)$ are square matrices. Therefore  $\mathbf{Q}(n)$ is exactly the (two-sided) inverse of $\mathbf{B}(n)$. That is $\mathbf{B}(n)\mathbf{Q}(n)=\mathbf{Q}(n)\mathbf{B}(n)=\mathbf{I}$. It follows that
\begin{equation*}
\sum_{r=1}^R\sum_{s=0}^{m_0-1}b_{m,r}(n+sL)\beta_{rj}(n+sL) =\delta(m-j).
\end{equation*}

Note that the Fourier coefficient of $\hat{x}_r(t)$ is $\frac{1}{L} \sum_{j=1}^M \sum_{k=0}^{L-1} y_j(\tfrac{2\pi k}{L})\beta_{rj}(n)e^{-\qi n \frac{2\pi k}{L}}$, then
\begin{align*}
\hat{y}_m(\tfrac{2\pi p}{L}) &= \sum_{r=1}^R \hat{x}_r *h_{mr} (\tfrac{2\pi p}{L}) \\
& =  \sum_{r=1}^R \sum_{n\in I^{\mathbf{N}}}\frac{1}{L} \sum_{j=1}^M \sum_{k=0}^{L-1} y_j(\tfrac{2\pi k}{L})\beta_{rj}(n)e^{-\qi n \frac{2\pi k}{L}} b_{mr}(n)e^{\qi n \frac{2\pi p}{L}} \\
& = \sum_{r=1}^R \sum_{n\in I_1} \sum_{s=0}^{m_0-1}\frac{1}{L} \sum_{j=1}^M \sum_{k=0}^{L-1} y_j(\tfrac{2\pi k}{L})\beta_{rj}(n+sL)e^{-\qi (n+sL) \frac{2\pi k}{L}} b_{mr}(n+sL)e^{\qi (n+sL) \frac{2\pi p}{L}} \\
&= \sum_{r=1}^R \sum_{n\in I_1} \sum_{s=0}^{m_0-1}\frac{1}{L} \sum_{j=1}^M \sum_{k=0}^{L-1} y_j(\tfrac{2\pi k}{L})\beta_{rj}(n+sL) b_{mr}(n+sL) e^{\qi n \frac{2\pi (p-k)}{L}}  \\
&= \sum_{r=1}^R \sum_{n\in I_1} \sum_{s=0}^{m_0-1}\frac{1}{L} \sum_{j=1}^M \sum_{k=0}^{L-1} y_j(\tfrac{2\pi k}{L})\beta_{rj}(n+sL) b_{mr}(n+sL) e^{\qi n \frac{2\pi (p-k)}{L}}  \\
\end{align*}
Taking the summation over $r,s,n$ in turn, and
using the equality $\frac{1}{L}\sum_{n\in I_1} e^{\qi n \frac{2\pi (p-k)}{L}}= \delta(p-k)$ for $0\leq p,k\leq L-1$, we have that
\begin{align*}
\hat{y}_m(\tfrac{2\pi p}{L}) = \sum_{j=1}^M \sum_{k=0}^{L-1} y_j(\tfrac{2\pi k}{L})\delta(m-j)\delta(p-k)={y}_m(\tfrac{2\pi p}{L}),
\end{align*}
which completes the proof of the extended interpolation consistency.

When $M$ is  not divisible by $R$, we have $m_0R<M$. It implies that  $\mathbf{Q}(n)$ is not a  right inverse of $\mathbf{B}(n)$. Thus   we cannot be sure if the  extended interpolation consistency still hold for this case. We will give an example to show  the extended interpolation consistency is not valid for $M=3,R=2$ in Section \ref{S_examples} (see Figure \ref{consistency_test_2}).

Although the   extended interpolation consistency does not hold for the case  $m_0R<M$,
the ``inconsistency"
\begin{equation*}
\sum_{m=1}^M\sum_{p=0}^{L-1}\abs{\hat{y}(\tfrac{2\pi p}{L})-{y}(\tfrac{2\pi p}{L})}^2
\end{equation*}
will decrease  as the number of samples increases. This is because  regardless of whether the   extended interpolation consistency  holds or not, $\hat{x}_r(t)$ would tend to ${x}_r(t)$  as the number of samples goes to infinity. The next part is devoted to   the error analysis of the  reconstruction  by the proposed MIMO sampling expansion.

\subsection{Theoretical Error analysis}

If some of  the original input signals are not band-limited, the formula (\ref{reconst-formula}) still provide an approximation for $x_r(t)$. Let $x_{r,\tau}(t):=x_r(t-\tau)$ be the translated signal of $x_r(t)$, we see that
\begin{equation}
[\mathcal{T}_{\mathbf{N}} x_{r,\tau}](t) = \frac{1}{L} \sum_{m=1}^M \sum_{p=0}^{L-1} y_m(\tfrac{2\pi p}{L}-\tau) g_{rm}(t-\tfrac{2\pi p}{L}),\quad  1\leq r \leq R .
\end{equation}
It can be verified that  $\mathcal{T}_{\mathbf{N}}$ is not   shift-invariant,  which means that
$[\mathcal{T}_{\mathbf{N}} x_{r,\tau}](t) \neq [\mathcal{T}_{\mathbf{N}} x_{r}](t-\tau)$.  We consider the following averaged MSE:
\begin{equation}\label{averagedMSE}
\epsilon(x_r,\mathbf{N}) :=\sqrt{\frac{L}{2\pi}\int_0^{\frac{2\pi}{L}}\frac{1}{2\pi}\int_{\mathbb{T}}\abs{ x_{r,\tau}(t) -[\mathcal{T}_{\mathbf{N}} x_{r,\tau}](t)}^2 dt d\tau}.
\end{equation}
It is not difficult to observe that as  $L$ approaches infinity,  $\epsilon(x_r,\mathbf{N})$ will tend towards the actual MSE:
\begin{equation}\label{actualMSE}
\xi(x_r,\mathbf{N}) :=\sqrt{ \frac{1}{2\pi}\int_{\mathbb{T}}\abs{ x_{r}(t) -[\mathcal{T}_{\mathbf{N}} x_{r}](t)}^2 dt  }.
\end{equation}
Therefore, when $L$  is large, $\epsilon(x_r,\mathbf{N})$ is a good approximation of  $\xi(x_r,\mathbf{N})$.

Squaring both sides of (\ref{averagedMSE}) yields
\begin{align*}
\epsilon^2(x_r,\mathbf{N}) = & \frac{L}{2\pi}\int_0^{\frac{2\pi}{L}}\frac{1}{2\pi}\int_{\mathbb{T}}\abs{ x_{r,\tau}(t) -[\mathcal{T}_{\mathbf{N}} x_{r,\tau}](t)}^2 dt d\tau \\
 =  & \frac{L}{2\pi}\int_0^{\frac{2\pi}{L}}\frac{1}{2\pi}\int_{\mathbb{T}}\abs{ x_{r,\tau}(t)}^2 dt d\tau \\
& + \frac{L}{2\pi}\int_0^{\frac{2\pi}{L}}\frac{1}{2\pi}\int_{\mathbb{T}}[\mathcal{T}_{\mathbf{N}} x_{r,\tau}](t) \overline{x_{r,\tau}(t)}  dt d\tau   \\
& + \frac{L}{2\pi}\int_0^{\frac{2\pi}{L}}\frac{1}{2\pi}\int_{\mathbb{T}} \overline{[\mathcal{T}_{\mathbf{N}} x_{r,\tau}](t)} x_{r,\tau}(t)  dt d\tau  \\
& +\frac{L}{2\pi}\int_0^{\frac{2\pi}{L}} \frac{1}{2\pi}\int_{\mathbb{T}} \abs{[\mathcal{T}_{\mathbf{N}} x_{r,\tau}](t) }^2dt d\tau.
\end{align*}
We will calculate the above four items separately.

Note that the Fourier coefficient  of $x_{r,\tau}(t)$ is $a_r(n)e^{-\qi n \tau}$, it is easy to see that
$$
 \frac{L}{2\pi}\int_0^{\frac{2\pi}{L}}\frac{1}{2\pi}\int_{\mathbb{T}}\abs{ x_{r,\tau}(t)}^2 dt d\tau= \sum_{n\in \mathbb{Z}} \abs{a_r(n)}^2.
$$
Let
$$
v_m(n,\tau)  =\sum_{k\in\mathbb{Z}}\sum_{j=1}^R a_j(n+kL) b_{mj}(n+kL) e^{-\qi k L\tau},
$$
it follows that the Fourier coefficient  of $[\mathcal{T}_{\mathbf{N}} x_{r,\tau}](t)$ is $e^{-\qi n \tau} \sum_{m=1}^M v_m(n,\tau) \beta_{rm}(n)$.
By direct computation, we have
\begin{align*}
 &\frac{L}{2\pi}\int_{0}^{\frac{2\pi}{L}}v_m(n,\tau) d\tau \\
=& \frac{L}{2\pi}\int_{0}^{\frac{2\pi}{L}}\sum_{k\in\mathbb{Z}}\sum_{j=1}^R a_j(n+kL) b_{mj}(n+kL) e^{-\qi k L\tau} d\tau \\
=& \sum_{k\in\mathbb{Z}}\sum_{j=1}^R a_j(n+kL) b_{mj}(n+kL)   \frac{L}{2\pi}\int_{0}^{\frac{2\pi}{L}}e^{-\qi k L\tau} d\tau \\
=& \sum_{k\in\mathbb{Z}}\sum_{j=1}^R a_j(n+kL) b_{mj}(n+kL)   \delta(k) \\
=& \sum_{j=1}^R a_j(n) b_{mj}(n).
\end{align*}
For every $n\in I^{\mathbf{N}}$,
\begin{equation}\label{prod_delta}
\sum_{m=1}^M \beta_{rm}(n) b_{mj}(n) = \delta(r-j).
\end{equation}
By Plancherel theorem we have that
\begin{align*}
 & \frac{L}{2\pi}\int_0^{\frac{2\pi}{L}}\frac{1}{2\pi}\int_{\mathbb{T}}[\mathcal{T}_{\mathbf{N}} x_{r,\tau}](t) \overline{x_{r,\tau}(t)}  dt d\tau  \\
 =& \frac{L}{2\pi}\int_0^{\frac{2\pi}{L}} \sum_{n\in I^{\mathbf{N}}} \overline{a_r(n)} \sum_{m=1}^M v_m(n,\tau)\beta_{rm}(n) d\tau \\
 =&  \sum_{n\in I^{\mathbf{N}}} \overline{a_r(n)} \sum_{m=1}^M \beta_{rm}(n)\frac{L}{2\pi}\int_0^{\frac{2\pi}{L}}v_m(n,\tau)  d\tau \\
 =& \sum_{n\in I^{\mathbf{N}}} \overline{a_r(n)} \sum_{m=1}^M \beta_{rm}(n)\sum_{j=1}^R a_j(n) b_{mj}(n) \\
 =& \sum_{n\in I^{\mathbf{N}}} \overline{a_r(n)} \sum_{j=1}^R a_j(n)\sum_{m=1}^M \beta_{rm}(n)b_{mj}(n) \\
 =& \sum_{n\in I^{\mathbf{N}}} \abs{a_r(n)}^2.
\end{align*}
Taking the conjugate of $\frac{L}{2\pi}\int_0^{\frac{2\pi}{L}}\frac{1}{2\pi}\int_{\mathbb{T}}[\mathcal{T}_{\mathbf{N}} x_{r,\tau}](t) \overline{x_{r,\tau}(t)}  dt d\tau$, it gives
$$
\frac{L}{2\pi}\int_0^{\frac{2\pi}{L}}\frac{1}{2\pi}\int_{\mathbb{T}} \overline{[\mathcal{T}_{\mathbf{N}} x_{r,\tau}](t)} x_{r,\tau}(t)  dt d\tau  = \sum_{n\in I^{\mathbf{N}}} \abs{a_r(n)}^2.
$$

Now we calculate the last term of $\epsilon^2(x_r,\mathbf{N})$. Applying Plancherel theorem again,  we  obtain
\begin{align}
 &   \frac{L}{2\pi}\int_0^{\frac{2\pi}{L}} \frac{1}{2\pi}\int_{\mathbb{T}} \abs{[\mathcal{T}_{\mathbf{N}} x_{r,\tau}](t) }^2dt d\tau \nonumber \\
= &   \frac{L}{2\pi}\int_0^{\frac{2\pi}{L}} \sum_{n\in I^{\mathbf{N}}} \sum_{m=1}^M v_m(n,\tau) \beta_{rm}(n) \sum_{l=1}^M \overline{v_l(n,\tau)} \overline{\beta_{rl}(n)} d\tau \nonumber \\
=& \sum_{n\in I^{\mathbf{N}}} \sum_{m=1}^M \sum_{l=1}^M  \beta_{rm}(n) \overline{\beta_{rl}(n)} \frac{L}{2\pi}\int_0^{\frac{2\pi}{L}} v_m(n,\tau) \overline{v_l(n,\tau)} d\tau \nonumber \\
=& \sum_{k\in \mathbb{Z}}\sum_{n\in I^{\mathbf{N}}}  \sum_{m=1}^M \sum_{l=1}^M \sum_{j=1}^R \sum_{i=1}^R a_j(n+kL)b_{mj}(n+kL) \beta_{rm}(n) \overline{a_i(n+kL)}\overline{b_{li}(n+kL)}\overline{\beta_{rl}(n)} \nonumber \\
=&  \sum_{k\in \mathbb{Z}}\sum_{n\in I^{\mathbf{N}}}  \abs{\sum_{m=1}^M \sum_{j=1}^Ra_j(n+kL)b_{mj}(n+kL) \beta_{rm}(n)}^2 \nonumber \\
=&  \sum_{k\in \mathbb{Z}} \sum_{n\in J_k} \abs{\sum_{m=1}^M \sum_{j=1}^Ra_j(n)b_{mj}(n) \beta_{rm}(n-kL)}^2. \label{norm_hat_x_r}
\end{align}
In the above derivation, we used the following equality:
\begin{align*}
 &   \frac{L}{2\pi}\int_0^{\frac{2\pi}{L}} v_m(n,\tau) \overline{v_l(n,\tau)} d\tau \\
= &  \sum_{k\in \mathbb{Z}} \sum_{j=1}^R a_j(n+kL)b_{mj}(n+kL)\sum_{i=1}^R \overline{a_i(n+kL)}\overline{b_{li}(n+kL)}.
\end{align*}

We divide (\ref{norm_hat_x_r}) into three parts:
\begin{align}
 &\frac{L}{2\pi}\int_0^{\frac{2\pi}{L}} \frac{1}{2\pi}\int_{\mathbb{T}} \abs{[\mathcal{T}_{\mathbf{N}} x_{r,\tau}](t) }^2dt d\tau \nonumber \\ =& \sum_{n\in J_0} \abs{\sum_{m=1}^M \sum_{j=1}^R a_j(n)b_{mj}(n) \beta_{rm}(n)}^2 \nonumber \\
&+   \sum_{k\neq 0} \sum_{n\in J_k\cap I^{\mathbf{N}}} \abs{\sum_{m=1}^M \sum_{j=1}^R a_j(n)b_{mj}(n) \beta_{rm}(n-kL)}^2 \nonumber \\
&+  \sum_{k\neq 0} \sum_{n\in J_k\setminus I^{\mathbf{N}}} \abs{\sum_{m=1}^M \sum_{j=1}^R a_j(n)b_{mj}(n) \beta_{rm}(n-kL)}^2 . \nonumber
\end{align}
Applying (\ref{prod_delta}) and note that
$$
\sum_{m=1}^M b_{mj}(n_1) \beta_{rm}(n_2)=0
$$
for every $n_1,n_2\in I^{\mathbf{N}}$ satisfying $\frac{n_1-n_2}{L}\in \mathbb{Z}\setminus \{0\}$, we see that
$$
  \sum_{n\in J_0} \abs{\sum_{m=1}^M \sum_{j=1}^R a_j(n)b_{mj}(n) \beta_{rm}(n)}^2=\sum_{n\in  I^{\mathbf{N}}} \abs{a_r(n)}^2 
$$
and
$$
  \sum_{k\neq 0} \sum_{n\in J_k\cap I^{\mathbf{N}}} \abs{\sum_{m=1}^M \sum_{j=1}^R a_j(n)b_{mj}(n) \beta_{rm}(n-kL)}^2= 0 .
$$
It follows that 
\begin{equation}\label{exact_alias_error}
\epsilon^2(x_r,\mathbf{N})  =  \sum_{n\notin  I^{\mathbf{N}}} \abs{a_r(n)}^2 + \sum_{k\neq 0} \sum_{n\in J_k\setminus I^{\mathbf{N}}} \abs{\sum_{m=1}^M \sum_{j=1}^R a_j(n)b_{mj}(n) \beta_{rm}(n-kL)}^2. 
\end{equation}
By now, the  averaged MSE of the reconstruction by (\ref{reconst-formula}) has been given.
%$$
%  \sum_{k\neq 0} \sum_{n\in J_k\setminus I^{\mathbf{N}}} \abs{\sum_{m=1}^M \sum_{j=1}^R a_j(n)b_{mj}(n) \beta_{rm}(n-kL)}^2   .
%$$
In  addition to this, we also provide some upper bounds for $\epsilon^2(x_r,\mathbf{N})$.

Let
$$
\beta_{max,\mathbf{N} } :=\max_{n\in I^\mathbf{N}} \{ \abs{\beta_{rm}(n)}: 1\leq r\leq R,1\leq m\leq M\}.
$$
Note that for every $n\in J_k\setminus I^{\mathbf{N}}$, we have $n-kL\in I^\mathbf{N}$, it follows that
$$
\abs{\beta_{rm}(n-kL)} \leq \beta_{max,\mathbf{N} },\quad  n\in J_k\setminus I^{\mathbf{N}}.
$$
Therefore
\begin{align}
\epsilon^2(x_r,\mathbf{N})  = & \sum_{n\notin  I^{\mathbf{N}}} \abs{a_r(n)}^2 + \sum_{k\neq 0} \sum_{n\in J_k\setminus I^{\mathbf{N}}} \abs{\sum_{m=1}^M \sum_{j=1}^R a_j(n)b_{mj}(n) \beta_{rm}(n-kL)}^2  \nonumber \\
\leq & \sum_{n\notin  I^{\mathbf{N}}} \abs{a_r(n)}^2 + \sum_{k\neq 0} \sum_{n\in J_k\setminus I^{\mathbf{N}}}  \sum_{m=1}^M \sum_{j=1}^R \abs{a_j(n)b_{mj}(n) \beta_{rm}(n-kL)}^2 \nonumber  \\
\leq &  \sum_{n\notin  I^{\mathbf{N}}} \abs{a_r(n)}^2 +\beta_{max,\mathbf{N}}^2\sum_{k\neq 0} \sum_{n\in J_k\setminus I^{\mathbf{N}}}  \sum_{m=1}^M \sum_{j=1}^R \abs{a_j(n)b_{mj}(n)}^2 \nonumber \\
\leq & \sum_{n\notin  I^{\mathbf{N}}} \abs{a_r(n)}^2 + m_0\beta_{max,\mathbf{N}}^2 \sum_{m=1}^M \sum_{j=1}^R \sum_{n\notin  I^{\mathbf{N}}} \abs{a_j(n)b_{mj}(n)}^2. \label{upperbound}
 \end{align}
The last inequality is due to the following facts.
\begin{enumerate}
\item [(1)] If $n \in  I^{\mathbf{N}}$, it won't appear in the summation $\sum_{k\neq 0} \sum_{n\in J_k\setminus I^{\mathbf{N}}}$.
\item [(2)] If $n\notin  I^{\mathbf{N}}$, it   reappears not more than $m_0$ times in the   summation $\sum_{k\neq 0} \sum_{n\in J_k\setminus I^{\mathbf{N}}}$.
\end{enumerate}

Eq. (\ref{upperbound}) gives a  loose upper bound of the reconstruction error. Nonetheless, we can see from (\ref{upperbound})  that   if $\beta_{max,\mathbf{N}}$ is bounded over $\mathbf{N}$ then $\epsilon^2(x_r,\mathbf{N})\to 0$ as $\mathbf{N}\to(-\infty,\infty)$ provided that
$y_m(t)\in L^2(\mathbb{T})$ for $1\leq m\leq M$. Even if  $\beta_{max,\mathbf{N}}$ is not bounded over $\mathbf{N}$,  we can still get $\epsilon^2(x_r,\mathbf{N})\to 0$ provided that   the rate of $\abs{a_j(n)b_{mj}(n)}\to 0$  is  sufficiently fast for $ 1\leq j\leq R,1\leq m\leq M$. In fact,  for every example   in the paper, $\beta_{max,\mathbf{N}}$ is   bounded over $\mathbf{N}$. We give an upper bound of $\beta_{max,\mathbf{N}}$  for each example in Table \ref{T_ub_beta}. It is noted that the upper bounds presented in this table are not tight (implies that there may exist smaller upper bounds for each example), but it is sufficient to illustrate the convergence property.

 Another observation   reflected in Eq.  (\ref{upperbound}) is that  $\epsilon^2(x_r,\mathbf{N})= 0$  if $a_r(n)=0$ ($1\leq r\leq R$) for all $n\notin I^{\mathbf{N}}$.   It means that if $x_r(t)\in B_{\mathbf{N}}$ ($1\leq r\leq R$), whatever sampling scheme we select in the MIMO system, the reconstruction operator $\mathcal{T}_{\mathbf{N}}$ defined
by (\ref{reconst-formula}) will come into being the same result if the
total samples used in Eq. (29) is equal to $\mu(I^{\mathbf{N}})$. Otherwise,  $\mathcal{T}_{\mathbf{N}}$ will lead to imperfect  reconstruction for every $x_r(t)$ as long as  there is a input signal $x_{r_0}(t)\notin  B_{\mathbf{N}}$. Furthermore,  different sampling schemes will bring different errors under imperfect reconstruction situations even  the  number of samples is the same. Based on the error formula given in (\ref{exact_alias_error}), we will analyze  the errors of the MIMO type reconstructions for several specific sampling schemes   experimentally.

 \begin{table}[htb]
\caption{$\beta_{max,\mathbf{N}}$ is  bounded over $\mathbf{N}$ for all the examples, an  upper bound is given for each example. } \vspace{0.2cm}	\centering
	\begin{tabular}{c|c c c c c c c c}
\hline
No. of Example  & 1  & 2 & 3 &4 &5 &6 &7&8  \\
\hline
upper bound of $\beta_{max,\mathbf{N}}$  & $1$ & $2$ & $1$ & $4$ & $20$ &  $1$  & $4$
		& $2 \sqrt{2} $ \\ \hline
	\end{tabular}\label{T_ub_beta}
\end{table}

\section{Numerical Examples}\label{S_examples}

In this section, we use     numerical examples to show the effectiveness of the proposed   sampling and reconstruction framework. Firstly, we use the proposed FFT-based algorithm to   perfectly  reconstruct band-limited signals from their MIMO samples. Secondly,    the proposed MIMO framework is employed to sample and reconstruct  non-band-limited signals. We will perform the consistency test and the aliasing error analysis for the proposed reconstruction algorithm by numerical simulations.   Finally, we will examine the performance of the proposed reconstruction method in the presence of noise.  

We have presented  several examples of MIMO systems with left invertible $\mathbf{B}(n)$ in Table \ref{example_MIMOsyst}.  
Five representative MIMO systems (Example 1, 2, 5,  6, 7) are selected to conduct   numerical experiments to verify the theoretical results.
Based on the number of inputs and outputs of these MIMO systems, and noting that these systems involve derivatives or translations, we will refer to the corresponding    sampling schemes  as  \texttt{S-22d},  \texttt{S-22t},  \texttt{S-23t}, \texttt{S-23d} and \texttt{S-24d}, respectively.

\begin{figure}[!ht]
  \centering
  \includegraphics[width=0.95\textwidth]{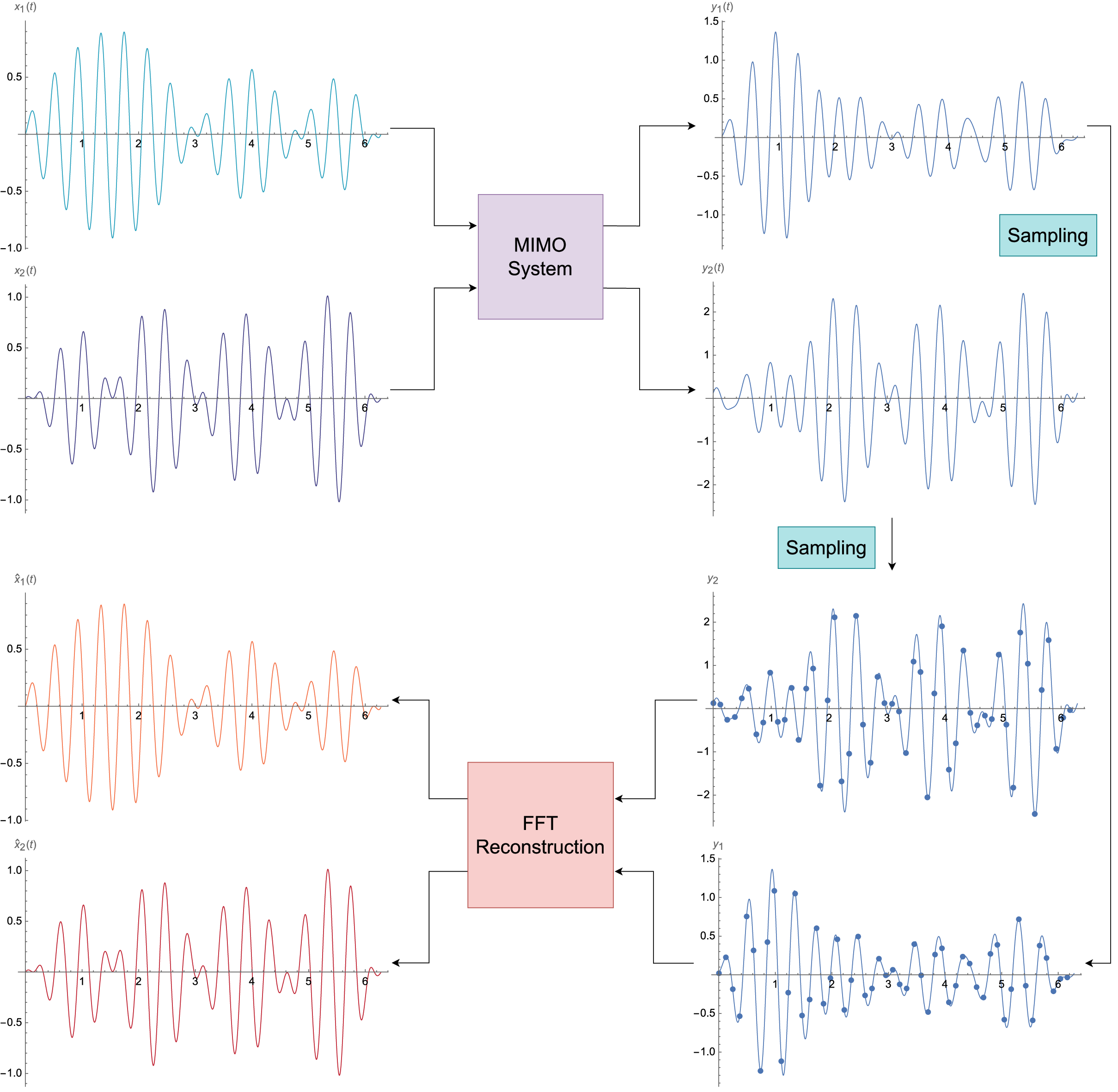}
  \caption{Error-free Reconstruction. The band-limited input signals  $x_1(t), x_2(t)$ (bandwidth $51$) are   sampled by the sampling scheme \texttt{S-22t} (see bottom right). They can be perfectly reconstructed by  the proposed FFT-based algorithm.}\label{MIMO_perfect_recons}
\end{figure}

\subsection{Error-free Reconstruction}\label{error-free-ex}
Let 
\begin{align*}
f_1(t)&=0.015 \left(0.12 t^4-1.28 t^3-5.88 t^2+e^{-t^2}-4.38 t+32.2325\right) \sin (15 t) \cos (1.5  -t),\\
f_2(t)&=0.03 \left(-0.3 t^4+1.2 t^3+1.6 t^2+2 e^{-t^2}-3 t+35\right) \sin (2 t) \cos (15 t),
\end{align*}
In this subsection, the test input signals $x_1(t), x_2(t)$ are constructed by passing   $f_1(t)$ and $f_2(t)$ through an ideal low-pass filter with bandwidth $51$. That is,
\begin{align}
x_1(t)&= (f_1*D_{25})(t), \label{bws1}\\
x_2(t)&=(f_1*D_{25})(t), \label{bws2} 
\end{align}
where $D_n(t)=\sum_{k=-n}^n e^{\qi k t}$ is the Dirichlet kernel. 

 A showcase of sampling and reconstruction processes for \texttt{S-22t}  is provided in Figure  \ref{MIMO_perfect_recons}. Concretely, the original input signals $x_1(t), x_2(t)$, the output signals $y_1(t), y_2(t)$, the MIMO samples and the reconstructed signals are presented in top-left, top-right, bottom-right and  bottom-left of Figure  \ref{MIMO_perfect_recons} respectively.   The reconstruction is indeed flawless, both in terms of error values and from a visual perspective.

Besides \texttt{S-22t}, we   have also taken other sampling schemes into account.  Without a doubt,  the original signals can also be perfectly reconstructed under the sampling schemes \texttt{S-22d}, \texttt{S-23d}, \texttt{S-23t} and  \texttt{S-24d}; nevertheless, the total number of samples required for reconstruction varies with different sampling schemes.  
In our setup,  the minimum total number of samples required for reconstruction is 
$$
 {M\cdot \mu(I^{\mathbf{N}})}/{\textstyle\lfloor\frac{M}{R} \rfloor}.
$$
This implies that for fixed $R$ and $\mu(I^{\mathbf{N}})$,   we need more samples to reconstruct the input signals when $M$ cannot be divided exactly by $R$. In the current setting (where $R=2, \mu(I^{\mathbf{N}})=51$),   we require $3\times 51=153$ samples to perfectly reconstruct  the input signals under the sampling scheme  \texttt{S-23t} and \texttt{S-23d}.  In contrast, under the  \texttt{S-22t},  \texttt{S-22d}  and  \texttt{S-24d} sampling schemes, only $102$ ($104$ for \texttt{S-24d}) samples are needed to achieve error-free signal reconstruction.

\begin{figure}[!ht]
  \centering
  \includegraphics[width=0.95\textwidth]{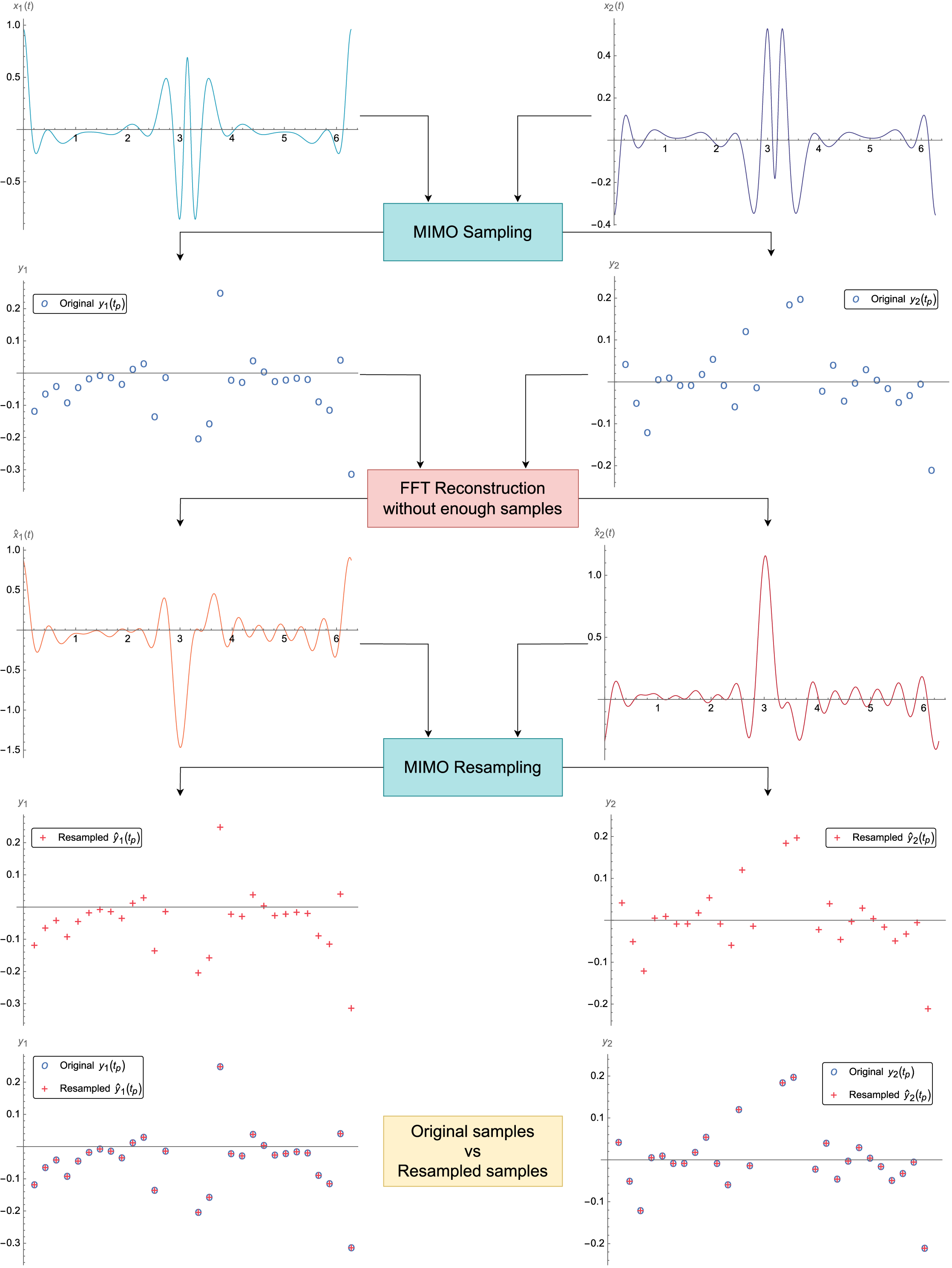}
  \caption{Consistency test under the sampling scheme \texttt{S-22t}.   Row 1--2:  the original signals and their MIMO samples. Row 3: the reconstructed signals. Row 4: the resampled samples. Row 5: the original and resampled samples are identical.}\label{consistency_test_1}
\end{figure}

\begin{figure}[!ht]
  \centering
  \includegraphics[width=0.95\textwidth]{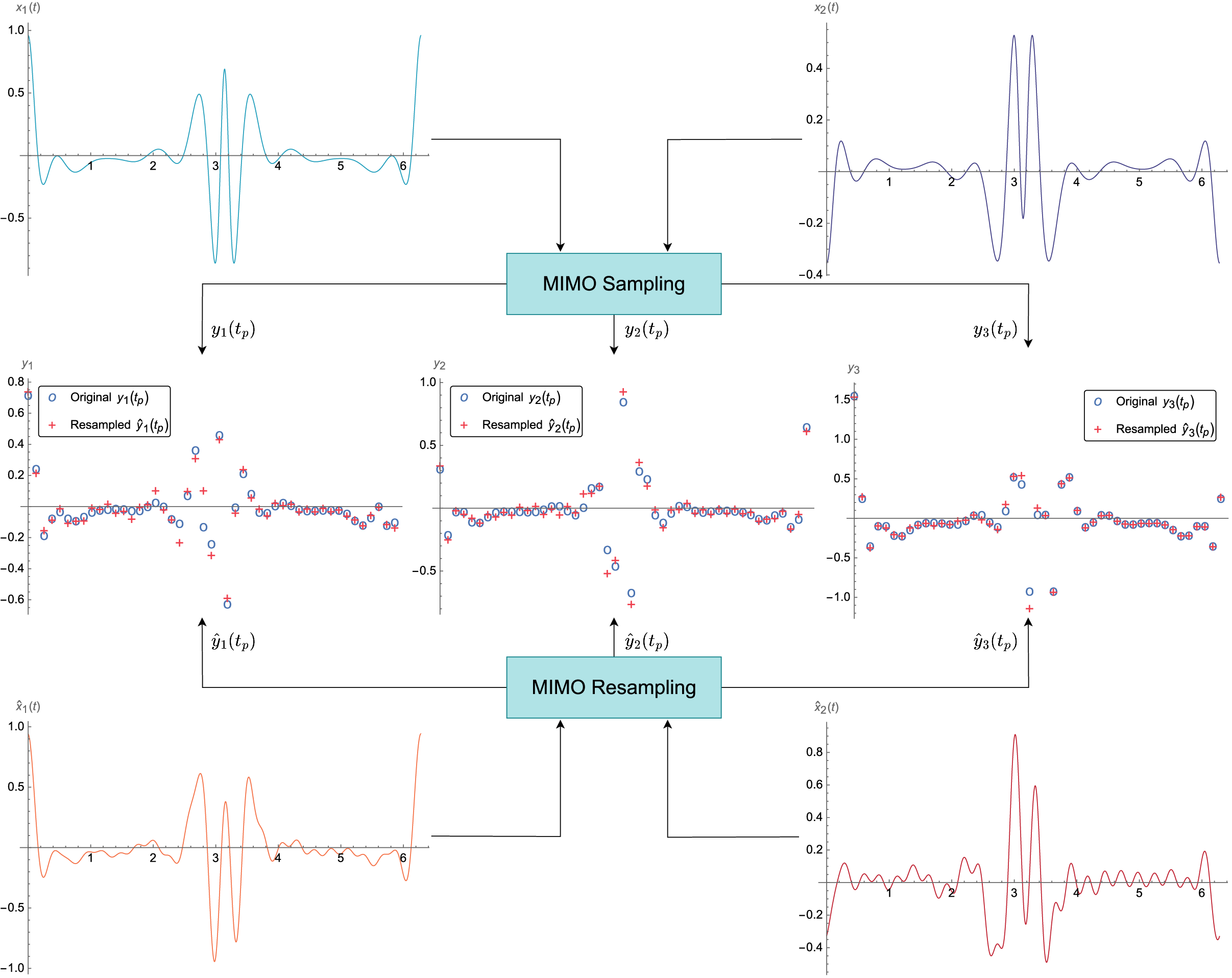}
  \caption{Consistency test under the sampling scheme \texttt{S-23t}.   Row 1:  the original signals. Row 3: the reconstructed signals. Row 2: The original samples do not match the resampled ones.}\label{consistency_test_2}
\end{figure}

\subsection{Consistency Test}
 Let
\begin{align*} 
&\phi_1(z)=\frac{0.08z^2+0.06z^{10}}{(1.3-z)(1.5-z)}+\frac{0.05z^3+0.
09z^{10}}{(1.2+z)(1.3+z)}, \\
&\phi_2(z)=\frac{0.036 z^{10}+0.024 z^2}{(z-1.3) (1.6  -z)}-\frac{0.06 \
z^{10}+0.048 z^3}{(z+1.2) (z+1.35)},
\end{align*}
then $\phi_1(z), \phi_2(z)$ belong to Hardy space on the  unit disk. In this subsection,  we use $x_1(t)=\operatorname{Re}[\phi_1(e^{\qi t})], x_2(t)=\operatorname{Re}[\phi_2(e^{\qi t})]$  as the test input signals, which are  not band-limited.

As depicted in Figure \ref{consistency_test_1},   we reconstruct the input signals with a small number of MIMO samples (\texttt{S-22t}), and it is clear that the reconstructed signals are far from the original signals. Nevertheless, when we resample the reconstructed signals,   the same samples are obtained as in the initial sampling. This is in line with the theoretical result:  the consistency of MIMO sampling holds when $M$   is divisible by $R$.

If  $M$   is not divisible by $R$, such as the sampling scheme \texttt{S-23t}, the consistency cannot be guaranteed. The experimental results show that for the sampling scheme \texttt{S-23t}, the resampled samples differ from the original ones, as depicted in Figure \ref{consistency_test_2}. It is noted that the inconsistency diminishes as the number of samples increases. This is because the increase in samples will lead to a reduction in reconstruction error, it is not surprising the resampling results will  increasingly approach the original samples. 

Next, we will  examine how the aliasing error is affected by various sampling schemes and how  the aliasing error varies with the number of samples.

\begin{figure}[!ht]
  \centering
  \includegraphics[width=0.95\textwidth]{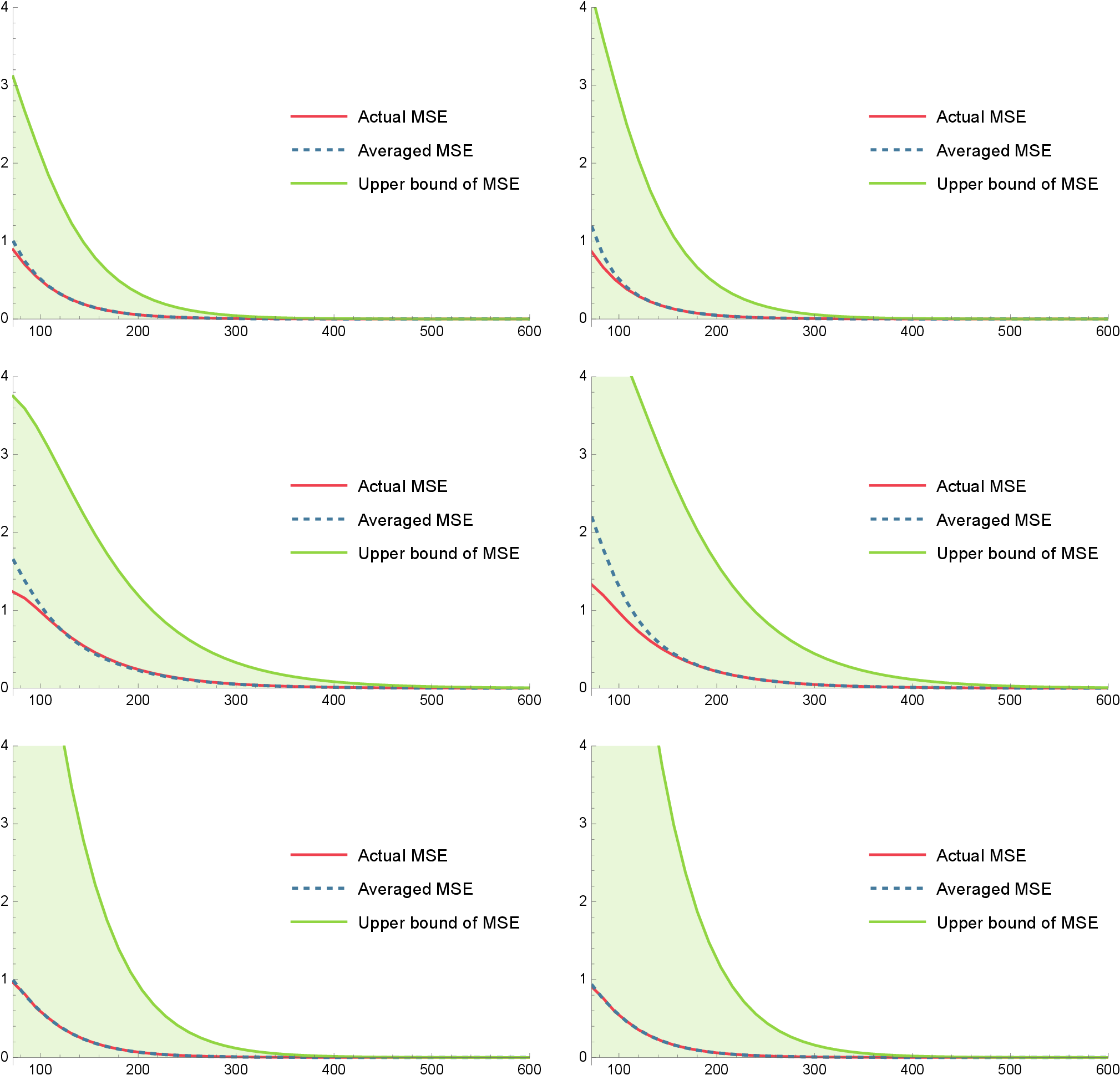}
  \caption{Actual MSE, averaged MSE and upper bound of MSE versus the total number of samples under the sampling scheme \texttt{S-22d} (first row), \texttt{S-23d} (second row) and  \texttt{S-24d} (third row). The test input signals are  $x_1(t)=\operatorname{Re}[\phi_1(e^{\qi t})], x_2(t)=\operatorname{Re}[\phi_2(e^{\qi t})]$. Left: the errors for reconstructing  $x_1(t)$. Right: the errors for reconstructing  $x_2(t)$.}\label{act_ave_up_error}
\end{figure} 

\subsection{Aliasing Error Analysis}

 In    Section \ref{S_consis_error}, we have provided a comprehensive theoretical analysis of the aliasing   error.  In addition to the averaged MSE, i.e.  $\epsilon(x_r,\mathbf{N})$,  we have also furnished  a more straightforward and easily computable upper bound  for $\epsilon(x_r,\mathbf{N})$.   
 In this part,  we are going to verify the following points experimentally: first, whether $\epsilon(x_r,\mathbf{N})$ approximates $\xi(x_r,\mathbf{N})$; second, whether  $\xi(x_r,\mathbf{N})$ is  dominated by   the provided error upper bound; and third, whether the actual MSE, the averaged MSE, and the  upper bound of MSE all converge to zero as $L\to \infty$.

We still use $x_1(t)=\operatorname{Re}[\phi_1(e^{\qi t})], x_2(t)=\operatorname{Re}[\phi_2(e^{\qi t})]$  as the test input signals.   The actual MSE, the averaged MSE, and the  upper bound of MSE for reconstructing $x_1(t)$ and $x_2(t)$ under the sampling scheme  \texttt{S-22d}, \texttt{S-23d} and  \texttt{S-24d}  are   plotted in Figure \ref{act_ave_up_error}. We see that the averaged MSE is close to the actual MSE and the two are nearly equivalent when the number of samples is slightly larger (especially for  \texttt{S-24d}).  Moreover,  both the averaged MSE and the actual MSE are less than the  upper bound of MSE. Although the provided upper bounds of MSE are rather rough (they are significantly larger than the actual MSEs), it is evident that   the  upper bounds still tend  toward $0$.

Figure \ref{error_compared} illustrates the actual MSE of the reconstruction  under   different sampling schemes and with varying numbers of samples.  It is noted that 
the horizontal  in this figure represents the number of samples per individual output channel, implying that sampling schemes with more channels will have a larger total number of samples. 
It is reasonable that when $L$ is fixed, the actual MSE of reconstruction under  the sampling scheme \texttt{S-24d} will be smaller than that under \texttt{S-23d} and \texttt{S-22d}. However,  the actual MSE of the reconstruction under the sampling scheme \texttt{S-23d} did not fall below that of \texttt{S-22d} as expected. We observe that when $L$ is fixed,   the actual MSE of the reconstruction  under the sampling scheme  \texttt{S-23d} is almost identical to that under the sampling scheme  \texttt{S-22d}; that is, the additional samples from the extra channel in  \texttt{S-23d} do not contribute to an improvement in accuracy. The main reason for this situation is that the additional samples do not effectively broaden the bandwidth of the reconstructed signals,   thus do not improve the reconstruction accuracy.

\begin{figure}[ht]
  \centering
  \includegraphics[width=0.95\textwidth]{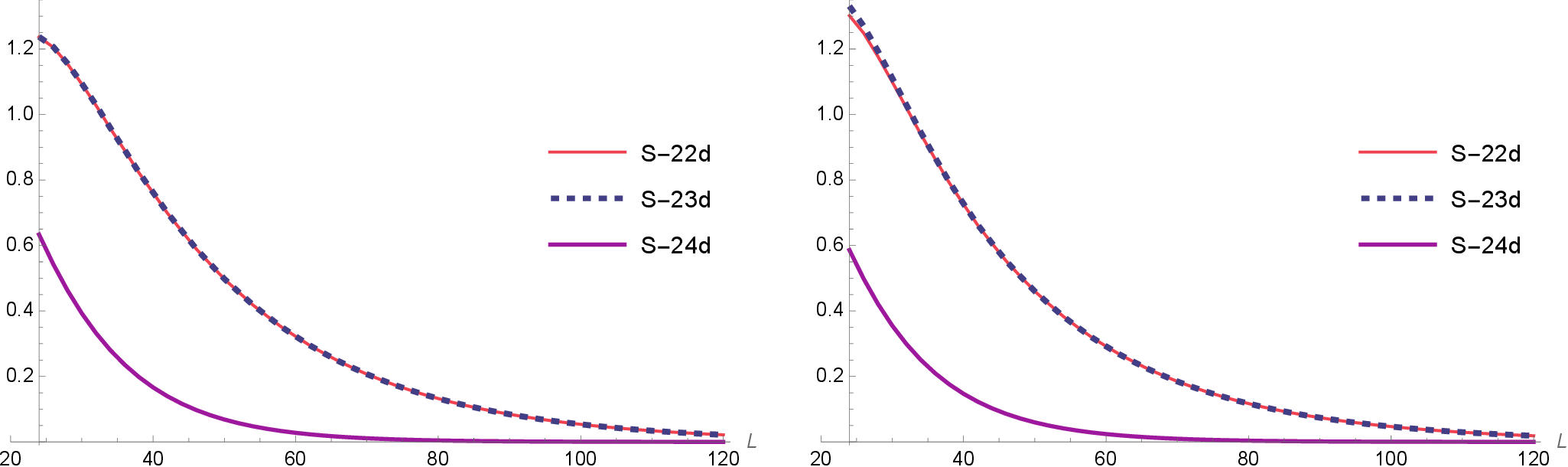}
  \caption{Actual MSE  versus $L$ (the number of samples in each channel) under  the sampling scheme  \texttt{S-22d}, \texttt{S-23d} and  \texttt{S-24d}. The test input signals are  $x_1(t)=\operatorname{Re}[\phi_1(e^{\qi t})], x_2(t)=\operatorname{Re}[\phi_2(e^{\qi t})]$. Left: the errors for reconstructing  $x_1(t)$. Right: the errors for reconstructing  $x_2(t)$.}\label{error_compared}
\end{figure}

In fact, the bandwidth of the reconstructed signals is 
$
\lfloor\tfrac{M}{R}\rfloor\cdot L,
$
where $L$ is the number of samples in each channel. Therefore, when \( M \) is not divisible by \( R \),  rounding down $\tfrac{M}{R}$ will result in a ``discount" to the bandwidth of the reconstructed signals. Given also that consistency is   not guaranteed in this scenario, it is not recommended to use a sampling scheme where \( R \) does not divide \( M \)  for converting between continuous and discrete signals.

 \begin{figure}[!ht]
  \centering
  \includegraphics[width=0.95\textwidth]{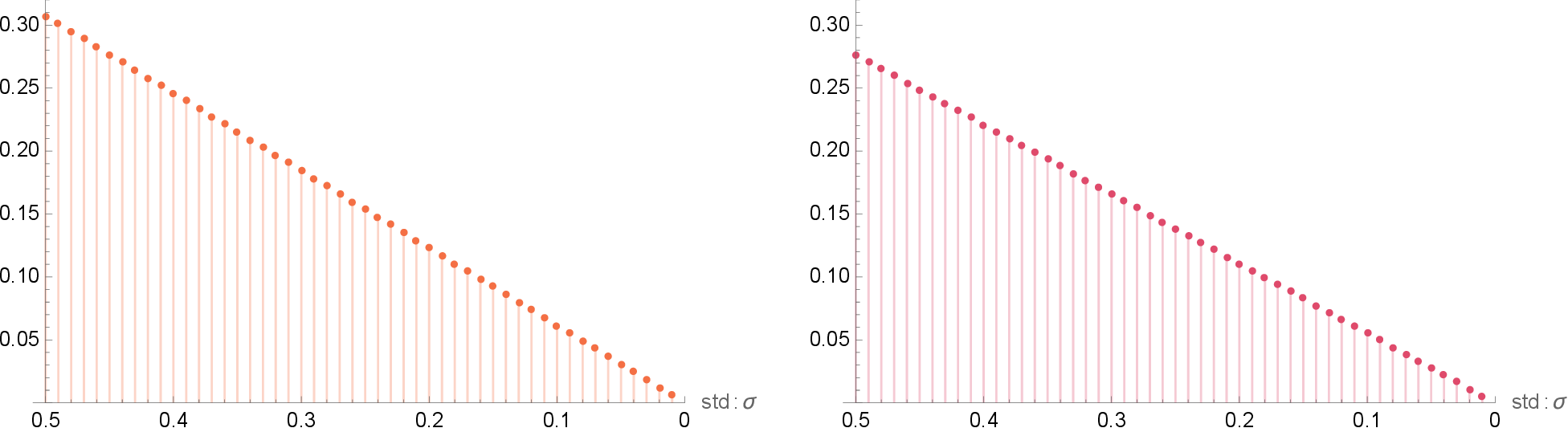}
  \caption{Actual MSE versus  $\sigma_{\eta}$ under the sampling scheme \texttt{S-22d} with $L=51$. The test input signals  $x_1(t)$ and  $x_2(t)$ are given by (\ref{bws1}) and  (\ref{bws2}) respectively. Left: the errors for reconstructing  $x_1(t)$. Right: the errors for reconstructing  $x_2(t)$.}  \label{stable_analysis_noi_sigma}
\end{figure} 

\begin{figure}[!ht]
  \centering
  \includegraphics[width=0.95\textwidth]{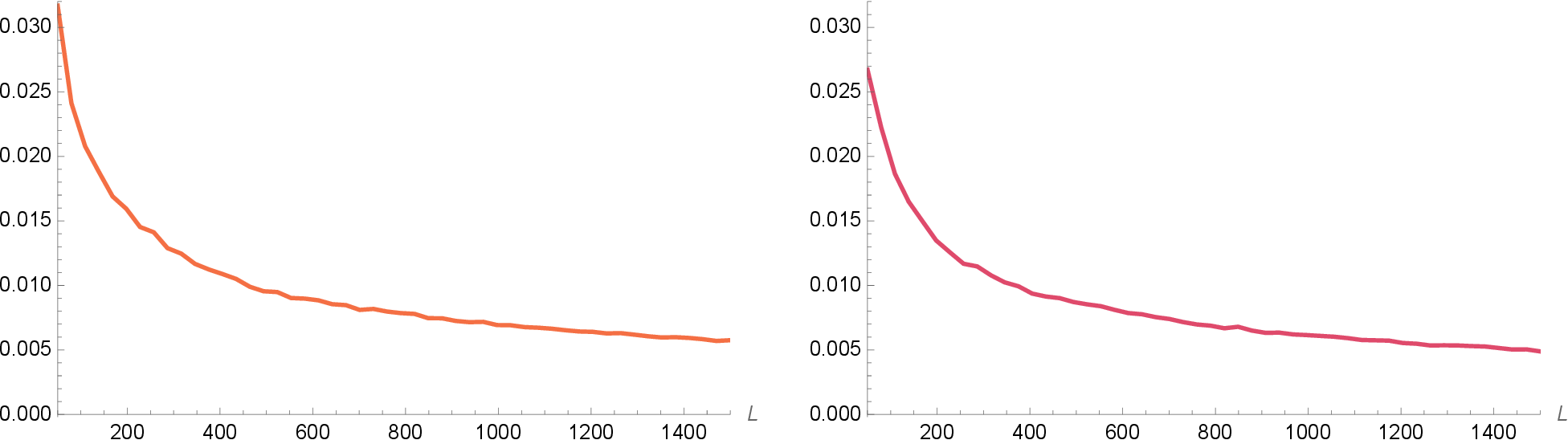}
  \caption{$\widetilde{\xi}(x_r,\mathbf{N})$ versus $L$  under the sampling scheme \texttt{S-22d} with $\sigma_{\eta}=0.05$. The test input signals  $x_1(t)$ and  $x_2(t)$ are given by (\ref{bws1}) and  (\ref{bws2}) respectively. Left: the errors for reconstructing  $x_1(t)$. Right: the errors for reconstructing  $x_2(t)$.}\label{converge_noi_L}
\end{figure}

\subsection{Signal Reconstruction from Noisy Samples}

In this subsection, we will analyze the behavior of MIMO sampling in the presence of noise.
Let (\ref{bws1}) and (\ref{bws2}) be the test input signals, we will use the noisy samples
\begin{equation*}
 s_{m,p}=y_m(\tfrac{2\pi p}{L})  +\eta_{m,p},\quad   0\leq p\leq L-1,~  1\leq m \leq M 
\end{equation*}
to reconstruct  $x_1(t)$ and $x_2(t)$. Here,  \(\eta_{m,p}\) represents an independent and identically distributed (IID)  noise process characterized by a zero mean and a variance of \(\sigma_{\eta}^2\).  We aim to demonstrate the following two assertions.  First, the proposed reconstruction method is stable with respect to noise, meaning that small errors in samples caused by noise will not lead to significant errors in the reconstructed signals; second, oversampling helps to reduce the impact of noise on the reconstruction.

We can see from Figure \ref{stable_analysis_noi_sigma} that the actual MSE is linearly related to the noise level (standard deviation). Furthermore, when the standard deviation approaches $0$, the error also tends to $0$. This implies that the proposed reconstruction method is stable against noise. For any fixed $t$, it follows from (\ref{reconst-formula}) that $[\mathcal{T}_{\mathbf{N}} x_r](t)$ can just regarded as  a finite weighted sum of all samples. Therefore, the reconstructed signals are continuous with respect to the  samples.

 Incorporating oversampling with low-pass post-filtering is a feasible tactic \cite{Pawlak2003postfilter} to reduce the errors  caused by noise in  signal reconstruction.
To validate   this tactic continues to be effective in the MIMO configuration, we   will investigate how   $\widetilde{\xi}(x_r,\mathbf{N})$ changes with $L$, where
$$
\widetilde{\xi}(x_r,\mathbf{N}) = \sqrt{ \frac{1}{2\pi}\int_{\mathbb{T}}\abs{x_r(t) -\left([\widetilde{\mathcal{T}}_{\mathbf{N}} x_r]*D_{26}\right)(t)}^2 dt  },
$$
and 
$$
[\widetilde{\mathcal{T}}_{\mathbf{N}} x_r](t) = \frac{1}{L} \sum_{m=1}^M \sum_{p=0}^{L-1}  s_{m,p}~ g_{rm}(t-\tfrac{2\pi p}{L}),\quad  1\leq r \leq R.
$$
From Figure \ref{converge_noi_L},  we see that as \( L \) increases, $\widetilde{\xi}(x_r,\mathbf{N})$  gradually approaches zero. More precisely, it is observed that $\widetilde{\xi}(x_r,\mathbf{N})$ decreases more rapidly in the initial phase, but the rate of decrease slows considerably in the subsequent segment. Therefore, the combination of oversampling and low-pass filtering can suppress the impact of noise in the framework of MIMO reconstruction. However, when there are a sufficient number of samples, simply truncating high frequencies does not yield satisfactory results; more effective noise reduction methods should be explored. This topic will be  investigated in our  upcoming studies.

\section{Conclusion}\label{S_conclusion}

In this paper, we present a MIMO sampling and reconstruction framework for signals of finite duration. We demonstrate that under certain conditions on the MIMO system and the sampling density, the input periodic band-limited signals can be perfectly reconstructed from the samples of the output signals. Moreover, the reconstruction can be performed stably and efficiently using an FFT-based algorithm.

We also explore the performance of the proposed sampling theorem in reconstructing non-band-limited signals. It is shown that if $M$ is divisible by $R$, then the proposed MIMO sampling is consistent. Otherwise, consistency is not guaranteed, and there is a loss of bandwidth in the reconstructed signals. Regardless of whether $M/R$ is an integer, we provide the average MSE to measure the difference between the original input signals and the reconstructed signals. The experiments show that the average MSE closely approximates the actual MSE.

Additionally, when the output signals are sampled in the presence of additive noise, we show that the proposed reconstruction method remains stable. Furthermore, the effectiveness of post-filtering in noise reduction is also investigated.

\section*{Acknowledgement}

This work was supported by the Guangdong Basic and Applied Basic Research Foundation (No. 2019A1515111185), the Research Development Foundation of Wenzhou Medical University (QTJ18012), Wenzhou Science and Technology Bureau (G2020031) and Scientific Research Task of Department of Education of Zhejiang Province (Y202147071).

\section*{Appendix}

\subsection*{Upper bound of $\beta_{max,\mathbf{N}}$}

We only provide the detail to derive an upper bound of $\beta_{max,\mathbf{N}}$ for Example 5.

$$
\mathbf{Q}(n) =
  \frac{1}{-8 e^{\qi \alpha  n}+3 e^{2 \qi \alpha  n}+e^{4 \qi \alpha  n}+6}
\begin{bmatrix}
 -2 e^{\qi \alpha  n}-e^{2 \qi \alpha  n}+2 & -2+e^{3 \qi \alpha  n} & 2 \left(-e^{\qi \alpha  n}+e^{2 \qi \alpha  n}+1\right) \\
 3 e^{\qi \alpha  n}+e^{3 \qi \alpha  n}-2 & -2 e^{\qi \alpha  n}-e^{2 \qi \alpha  n}+5 & -4 e^{\qi \alpha  n}+e^{2 \qi \alpha  n}+1 \\
\end{bmatrix}
$$
Let $f(z)=-8z+3z^2+z^4+6$, then $f(z)$ has four zeros in complex plane $\mathbb{C}$ and none  of them lie within the unit circle. Thus there exists $\varepsilon_0 >0$ such that
$\abs{f(z)}>\varepsilon_0$ for all $z\in \{z:\abs{z}=1\}$. In fact, by solving the   optimization problem
$$\min~\abs{f(z)} \quad \textrm{s. t.} \quad \abs{z}=1,$$
we have that $\abs{f(z)}>\frac{2}{5}$ for all $z\in \{z:\abs{z}=1\}$. Note that no fraction in $\mathbf{Q}(n)$   has a numerator whose norm is greater than    $8$ (e.g., $\abs{-2 e^{\qi \alpha  n}-e^{2 \qi \alpha  n}+5}\leq 8$). It follows that
$\beta_{max,\mathbf{N}}\leq 8\div \frac{2}{5}=20$.  It generally does not happen that the numerator reaches the maximum value and the denominator reaches the minimum value at the same time. Therefore it  is likely that there is an upper bound smaller than $20$. Nevertheless, this is sufficient to illustrate the convergence.

{\small
 \setlength{\itemsep}{-0.8mm}
\bibliographystyle{IEEEtran}
\bibliography{IEEEabrv,myreference20240106p}
}

\end{document}